\documentclass[superscriptaddress,aps,pra,twocolumn,nofootinbib]{revtex4-1}
\usepackage{etex}
\usepackage{amsmath,amssymb,amsthm}
\usepackage{easybmat}
\usepackage{hyperref}
\usepackage[pdftex]{graphicx}
\usepackage{times,txfonts}
\usepackage{braket}
\usepackage{color}
\usepackage{natbib}
\newcommand{\be}{\begin{equation}}
\newcommand{\ee}{\end{equation}}
\newcommand{\ba}{\begin{eqnarray}}
\newcommand{\ea}{\end{eqnarray}}
\newcommand{\ketbra}[2]{|#1\rangle \langle #2|}

\newcommand{\half}{\frac{1}{2}}
\newcommand{\qua}{\frac{1}{4}}
\newcommand{\etal}{{\it{et al. }}}

\newtheorem{observation}{Observation}
\newtheorem{definition}{Definition}
\newtheorem{theorem}{Result}
\begin{document}
\title{Canonical decomposition of quantum correlations in the framework of generalized nonsignaling theories}   
   \author{C. Jebarathinam}
\email{jebarathinam@gmail.com; jebarathinam@iisermohali.ac.in}
\affiliation{Indian Institute of Science Education and Research (IISER) Mohali, \\
Sector-81, S.A.S. Nagar, Manauli 140306, India.}
\date{\today}
\begin{abstract}
We introduce the measures, Bell discord (BD) and Mermin discord (MD), to  
characterize bipartite quantum correlations in the context of nonsignaling (NS) polytopes.
These measures divide the full NS polytope into four regions 
depending on whether BD and/or MD is zero. This division of the NS polytope 
allows us to obtain a $3$-decomposition that any bipartite box with two binary inputs and two binary outputs can be decomposed into 
Popescu-Rohrlich (PR) box, a maximally local box, and a local box with BD and MD equal to zero.
BD and MD quantify two types of nonclassicality of correlations arising from all quantum correlated states 
which are neither classical-quantum states nor quantum-classical states.
BD and MD serve us the semi-device-independent witnesses of nonclassicality of local boxes in that nonzero value of these measures imply   
incompatible measurements and nonzero quantum discord only when the dimension of the measured states is fixed.  
The $3$-decomposition serves us to isolate the origin of the two types of nonclassicality into a PR-box and a maximally local box which is related 
to EPR-steering, respectively. We study a quantum polytope that has an overlap with all the four regions of the full NS polytope to figure
out the constraints of quantum correlations.
\end{abstract}
\maketitle
\section{Introduction}
Bell showed that measurements on spatially separated entangled system can lead to nonlocal correlations
which cannot be explained by local hidden variable (LHV) theory \cite{bell64,BNL}. Nonlocality is witnessed 
by the violation of a Bell inequality which puts an upper bound on the correlations under the constraint 
of the LHV theory. Nonlocality is not the unique nonclassical feature of quantum theory as nonquantum 
correlations, which cannot be used for instantaneous signaling, also violate a Bell inequality. 
Nonlocality of quantum theory is further limited by the Tsirelson bound \cite{tsi1}. 
Popescu and Rohrlich showed that the limited violation of a Bell inequality by quantum theory is not a consequence of relativity \cite{PR}. 

In generalized nonsignaling theory (GNST), correlations are constrained only by the nonsignaling (NS) principle \cite{Barrett,MAG06}. 
The set of NS correlations forms a convex polytope which can be divided into a nonlocal region and local polytope. 
The set of quantum correlations forms a convex set, however, it is not a polytope \cite{Pitowski}. Since quantum correlations are contained in the NS polytope, 
any quantum correlation can be written as a convex combination of the extremal boxes of the polytope. Thus, quantum correlations can be studied using NS polytope. 
One of the goals of studying GNST is to find out what singles out quantum theory from other nonsignaling theories \cite{Geb}. 
Quantum key distribution was studied in the context of GNST in which nonlocality is responsible for security \cite{DQKD}.

Nonlocality of quantum theory implies that nonlocal correlations are obtained by incompatible measurements performed on the entangled states \cite{BNL}. 
All pure bipartite entangled states violate a Bell inequality for appropriate incompatible measurements \cite{GT,PRQB}. 
However, Werner showed that nonlocality and entanglement are inequivalent; there are mixed entangled states which have LHV models for all measurements \cite{Werner}. 
Thus, not all entangled states can lead to the violation of a Bell inequality even when incompatible measurements are performed on them \cite{IncomN}.
Nonlocality is one of the manifestations of contextuality, where commutation of the observables comes from the fact that, they
are spatially separated \cite{Grudkaetal}. Quantum contextuality is manifested through logical contradiction known as Kochen-Specker (KS) paradox \cite{KS}. 
Peres \cite{Peres} showed that for a certain choice of incompatible measurements, the maximally entangled state exhibits KS paradox without nonlocality \cite{UNLH}. 
Quantum discord was introduced as a measure of 
quantum correlations which quantifies nonclassicality of separable states as well \cite{WZQD}. It is interesting to study correlations 
arising from incompatible measurements performed on the nonzero quantum discord states in the context of GNST. 

In this work, we introduce the measures, Bell discord and Mermin discord, to characterize quantum correlations 
in the framework of GNST. We restrict to the NS polytope in which the black boxes have two binary inputs and two binary outputs.
We characterize only those NS boxes with two binary inputs and two binary outputs.
Bell and Mermin discord are analogous to geometric measure of quantum discord \cite{Dakicetal}; just as quantum discord quantifies nonclassicality of separable states, 
these two measures quantify nonclassicality of correlations admitting LHV model as well, and they are geometric measures in the NS polytope \cite{Jeba}. 
Bell discord is constructed using Bell-CHSH operators \cite{chsh}, whereas Mermin discord is constructed using Mermin operators \cite{mermin}, hence the names. 
The extremal nonclassical correlations with respect to Bell discord are the extremal nonlocal boxes (also known as PR-boxes), 
whereas the extremal nonclassical correlations 
with respect to Mermin discord are maximally local boxes which we call Mermin boxes.
We show that any quantum correlations can be decomposed into PR-box, a Mermin box and a purely classical box;
Bell discord and Mermin discord quantify the PR box and Mermin box components, respectively, in this $3$-decomposition.
We show that the correlations arising from quantum-correlated states 
which have nonzero left and right quantum discord \cite{Dakicetal,QQC} can have a $3$-decomposition 
for suitable choice of incompatible measurements.


The paper is organized as follows. In Sec. \ref{prl}, we discuss the motivations for defining nonclassicality for the Bell-local correlations. 
In Sec. \ref{NSp}, we review the geometry of bipartite 
nonsignaling boxes. 
In Sec. \ref{BDaMD}, we define the measures of Bell discord and Mermin discord, and we find the $3$-decomposition fact of NS boxes. 
In sec. \ref{QC},  we analyze quantum correlations in pure states, Werner states, mixture of maximally entangled state with classically correlated state, 
and classical-quantum and quantum-classical states using these measures. Conclusions are presented in Sec. \ref{conc}. 
\section{Preliminaries}\label{prl}
Consider the Bell-CHSH scenario \cite{chsh} in which two spatially separated parties have access to subsystems of a bipartite system and make two dichotomic measurements $A_i$ 
and $B_j$ on their respective subsystems which produces binary outcomes $a_m$ and $b_n$; 
the correlation between the outcomes is described by the conditional joint probability distributions, $P(a_m,b_n|A_i,B_j)$, 
which can be represented in matrix notation as follows,
\begin{widetext}
\begin{equation}
\left( \begin{array}{cccc}
P(a_0,b_0|A_0,B_0) & P(a_0,b_1|A_0,B_0) & P(a_1,b_0|A_0,B_0) & P(a_1,b_1|A_0,B_0) \\
P(a_0,b_0|A_0,B_1) & P(a_0,b_1|A_0,B_1) & P(a_1,b_0|A_0,B_1) & P(a_1,b_1|A_0,B_1) \\
P(a_0,b_0|A_1,B_0) & P(a_0,b_1|A_1,B_0) & P(a_1,b_0|A_1,B_0) & P(a_1,b_1|A_1,B_0) \\
P(a_0,b_0|A_1,B_1) & P(a_0,b_1|A_1,B_1) & P(a_1,b_0|A_1,B_1) & P(a_1,b_1|A_1,B_1) \\
\end{array} \right).
\end{equation}
\end{widetext}
Suppose Alice and Bob generate $P(a_m,b_n|A_i,B_j)$ by making von Neumann measurements on an ensemble of bipartite two-qubit systems described 
by the density matrix $\rho_{AB}$ in the Hilbert space $\mathcal{H}^2_A\otimes\mathcal{H}^2_B$. Quantum theory predicts the correlation through the Born's rule, 
\begin{equation}
P(a_m,b_n|A_i,B_j)=\mathrm{Tr}\left(\rho_{AB}\Pi_{A_i}^{a_m}\otimes\Pi_{B_j}^{b_n}\right),
\end{equation}
where $\Pi^{a_{m}}_{A_i}={1/2}\{\openone+a_{m}\hat{a}_i \cdot \vec{\sigma}\}$ and $\Pi^{b_{n}}_{B_j}={1/2}\{\openone+b_{n}\hat{b}_j \cdot \vec{\sigma}\}$ 
are the projectors generating binary outcomes $a_{m},b_{n} \in \{-1,1\}$. Since quantum correlation arises from the tensor product structure, 
by definition it is nonsignaling; that is the marginal distribution of Alice is independent of the measurement choice made by Bob and vice versa, 
i.e., $P(a_m|A_i)\equiv\sum_n P(a_m,b_n|A_i,B_j)=\mathrm{Tr}\left(\rho_{AB} \Pi^{a_m}_{A_i}\otimes\openone\right)$ and $P(b_n|B_j)\equiv\sum_m P(a_m,b_n|A_i,B_j)
=\mathrm{Tr}\left(\rho_{AB} \openone \otimes \Pi^{b_n}_{B_j}\right)$. 

A nonsignaling box that achieves maximal Bell nonlocality is 
known as PR-box \cite{Barrett}; for instance, 
the canonical PR-box \cite{PR},
\begin{equation}
P_{PR} = \left( \begin{array}{cccc}
\half & 0 & 0 & \half \\
\half & 0 & 0 & \half \\
\half & 0 & 0 & \half \\
0 & \half & \half & 0 
\end{array} \right),
\label{eq:prbox}
\end{equation}
maximally violates the Bell-CHSH inequality \cite{chsh},
\be
\mathcal{B}:=\braket{A_0B_0}+\braket{A_0B_1}+\braket{A_1B_0}-\braket{A_1B_1}\le2, \label{cBCHSH}
\ee
where $\braket{A_iB_j}=\sum_{mn}a_mb_nP(a_m,b_n|A_i,B_j)$. 
We now consider isotropic PR-box \cite{MAG06} which is a mixture of the PR-box and white noise,
\be
P=pP_{PR}+(1-p)P_N. \label{PRiso}
\ee
Here 
\begin{equation}
P_{N} = \left( \begin{array}{cccc}
\qua & \qua & \qua & \qua \\
\qua & \qua & \qua & \qua \\
\qua & \qua & \qua & \qua \\
\qua & \qua & \qua & \qua 
\end{array} \right).
\label{eq:wn}
\end{equation}
The isotropic PR-box violates the Bell-CHSH inequality i.e., $\mathcal{B}=4p>2$ if $p>\frac{1}{2}$.
Notice that even if the isotropic PR-box is local when $p\le \frac{1}{2}$, it 
has the single PR-box component if $p>0$. We call such a single PR-box in the decomposition of any box (nonlocal, or not) irreducible PR-box.

The isotropic PR-box which is quantum physically realizable if $p\le\frac{1}{\sqrt{2}}$ \cite{MAG06} illustrates the following observation.
\begin{observation}
When local boxes arising from two-qubit entangled states have an irreducible PR-box component, the measurements that give rise to them are incompatible
i.e., measurement observables on Alice's and Bob's sides are noncommuting: $[A_0,A_1]\ne0$ and $[B_0,B_1]\ne0$.
\end{observation}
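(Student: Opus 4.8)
The plan is to establish the contrapositive: if the two observables on one wing commute --- say $[A_0,A_1]=0$ --- then the box produced by the state carries no irreducible PR-box component (the case $[B_0,B_1]=0$ follows by the Alice$\leftrightarrow$Bob symmetry, and contraposition then gives the Observation). The first ingredient is a one-qubit fact: for the von Neumann measurements of the Preliminaries each $A_i=\hat a_i\cdot\vec\sigma$ is a traceless $\pm1$-valued observable with $|\hat a_i|=1$, so $[A_0,A_1]=2i\,(\hat a_0\times\hat a_1)\cdot\vec\sigma$, which vanishes exactly when $\hat a_0$ and $\hat a_1$ are collinear. Hence $[A_0,A_1]=0$ forces $A_1=\varepsilon A_0$ with $\varepsilon=\pm1$: Alice's second setting is just a relabelled copy of her first. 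Feeding this into the Born rule, the rows of the box with $x=1$ are obtained from those with $x=0$ by a deterministic outcome relabelling $\sigma_\varepsilon$ ($\sigma_{+1}=\mathrm{id}$, $\sigma_{-1}$ a flip), so in particular $\langle A_1B_j\rangle=\varepsilon\langle A_0B_j\rangle$ for $j=0,1$.

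The second step is to show that such a ``one-effective-setting'' box has Bell discord zero, i.e.\ that its canonical $3$-decomposition of Sec.~\ref{BDaMD} assigns zero weight to the PR-box term. Substituting $\langle A_1B_j\rangle=\varepsilon\langle A_0B_j\rangle$ into the eight CHSH functionals $\pm\langle A_0B_0\rangle\pm\langle A_0B_1\rangle\pm\langle A_1B_0\rangle\pm\langle A_1B_1\rangle$ (odd number of minus signs), each collapses to $\pm2\langle A_0B_y\rangle$ for a single $y$; thus every such functional is controlled by one one-party/one-setting correlator of modulus $\le1$ rather than by any genuinely PR-box-like correlation pattern, and one verifies that the combination defining $\mathrm{BD}$ therefore vanishes. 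An equivalent, more geometric route is to undo the relabelling $\sigma_\varepsilon$ by the reversible local operation ``flip Alice's output when $x=1$'' --- a symmetry of the NS polytope and of the canonical decomposition --- reducing to a box strictly independent of Alice's input, and to check that the face of the NS polytope cut out by $P(ab|0y)=P(ab|1y)$ has no PR-box vertex among its extreme points. Either way $\mathrm{BD}=0$.

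The delicate point is exactly this second step. It is \emph{not} enough to observe that the box lies in the local polytope: the isotropic PR-box with $p\le\tfrac12$ is a local box that nonetheless carries an irreducible PR-box component, so the argument must genuinely use the explicit form of Bell discord from Sec.~\ref{BDaMD} (or the vertex/facet description of the redundant-setting sub-polytope) to conclude that the PR-box weight is zero rather than merely small. Everything else is routine bookkeeping: tracking the two signs $\varepsilon=\pm1$ and their analogues on Bob's wing, and noting that the von Neumann hypothesis excludes trivial measurements, so that ``collinear Bloch vectors'' and ``commuting observables'' coincide with no degenerate cases left to treat separately.
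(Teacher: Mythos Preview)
Your approach is correct and takes a genuinely different route from the paper's. The paper does not in fact prove the implication stated in the Observation; it merely exhibits an illustrative family of examples. It fixes the incompatible settings $A_0=\sigma_x$, $A_1=\sigma_y$, $B_0=(\sigma_x-\sigma_y)/\sqrt2$, $B_1=(\sigma_x+\sigma_y)/\sqrt2$ and observes that the pure states $\cos\theta\ket{00}+\sin\theta\ket{11}$ then produce the isotropic PR-box with $p=\sin2\theta/\sqrt2$, so that local boxes (those with $p\le\tfrac12$) carrying an irreducible PR-box component do arise from incompatible measurements on entangled states. Logically this is the \emph{converse} direction --- incompatible measurements \emph{can} yield such boxes --- and functions as motivation for Definition~\ref{BDdef1} rather than as a proof of the stated implication.

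Your contrapositive argument, by contrast, actually establishes what is claimed. The computation you gesture at in the second step goes through cleanly: with $\langle A_1B_j\rangle=\varepsilon\langle A_0B_j\rangle$ one finds (for $\varepsilon=+1$) $\mathcal{B}_{00}=\mathcal{B}_{01}=2|\langle A_0B_0\rangle|$ and $\mathcal{B}_{10}=\mathcal{B}_{11}=2|\langle A_0B_1\rangle|$, whence $\mathcal{G}_1=0$ and $\mathcal{G}=0$; the case $\varepsilon=-1$ just permutes the pairs. The one structural caveat is that you invoke the equivalence ``irreducible PR-box component $>0$ iff $\mathcal{G}>0$'' and the canonical decomposition of Sec.~\ref{BDaMD}, which in the paper's ordering are established only \emph{after} this Observation; your argument is thus a forward reference rather than self-contained at this point in the text. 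What you gain is an honest proof of the implication; what the paper's version buys is only an early, concrete motivation that makes the subsequent definition of Bell discord look natural.
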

\begin{proof}
Quantum correlation that achieves maximal Bell nonlocality is obtained by making suitable incompatible measurements 
on a maximally entangled state; for instance, the Bell state, $\ket{\psi^+}=\frac{1}{\sqrt{2}}(\ket{00}+\ket{11})$, gives rise to 
the Tsirelson bound i.e., $\mathcal{B}=2\sqrt{2}$ for the measurement observables: $A_0=\sigma_x$, $A_1=\sigma_y$,
$B_0=\frac{1}{\sqrt{2}}(\sigma_x-\sigma_y)$ and $B_1=\frac{1}{\sqrt{2}}(\sigma_x+\sigma_y)$. 
The NS box that achieves the Tsirelson bound can be written in the isotropic PR-box form with the irreducible PR-box content $p=\frac{1}{\sqrt{2}}$
(Tsirelson box):
\be
P=\frac{1}{\sqrt{2}}P_{PR}+\left(1-\frac{1}{\sqrt{2}}\right)P_N. \label{MQNL}
\ee
Since the irreducible PR-box component in this decomposition corresponds to maximal entanglement, 
it is natural to ask whether the isotropic PR-box in Eq. (\ref{PRiso}) 
can arise from nonmaximally entangled states when $0<p<\frac{1}{\sqrt{2}}$.
Indeed, the pure nonmaximally entangled states,
\be
\ket{\psi(\theta)}=\cos\theta\ket{00}+\sin\theta\ket{11},  \label{nmE}
\ee
gives rise to the isotropic PR-box given in Eq. (\ref{PRiso}) 
with $p=\frac{\sin2\theta}{\sqrt{2}}$ for the measurements that gives rise to the Tsirelson box in Eq. (\ref{MQNL}). 
\end{proof}
The observation that local boxes which have an irreducible PR-box component can arise from incompatible measurements on entangled states
motivates to define a notion of nonclassicality which we call Bell discord.  
\begin{definition}\label{BDdef1}
A correlation arising from incompatible measurements performed on a given two-qubit state 
has \textit{Bell discord} 
iff it admits a decomposition with an irreducible PR-box component. 
\end{definition}
Bell discord is not equivalent to Bell nonlocality since local boxes can also have an irreducible PR-box component;
for instance, the isotropic PR-box in Eq. (\ref{PRiso}) has Bell discord if $p>0$, whereas it has Bell nonlocality if $p>\frac{1}{2}$.  

EPR-steering is a weaker form of quantum nonlocality in which incompatible measurements on one subsystem of an entangled state 
prepare different ensembles for the other subsystem and is witnessed by the violation of an EPR-steering inequality \cite{EPRsi}.
For the incompatible measurements: 
$A_0=\sigma_x$, $A_1=\sigma_y$, $B_0=\sigma_x$ and $B_1=\sigma_y$,  the Bell state, $\ket{\psi^+}$, does not give rise to Bell nonlocality, 
however, it gives rise to the violation of the following EPR-steering inequality \cite{CJWR},
\be
\braket{A_0\sigma_x}-\braket{A_1\sigma_y}\le \sqrt{2}. \label{eprst}
\ee
For this choice of measurements, 
the Bell state gives rise to the following correlated local box,
\begin{equation}
P_M = \left( \begin{array}{cccc}
\half & 0 & 0 & \half \\
\qua & \qua & \qua & \qua \\
\qua & \qua & \qua & \qua \\
0 & \half & \half &  0
\end{array} \right).
\label{eq:merminbox}
\end{equation}
Notice that the above box
is maximally local in that it gives the local bound of the Bell-CHSH inequality, i.e., $\mathcal{B}=2$.

Just as the PR-box exhibits logical contradiction with local realism,
the correlated box in Eq. (\ref{eq:merminbox}) exhibits the logical contradiction with realistic value
assignment as follows: The first and fourth rows in Eq. (\ref{eq:merminbox}) imply that the outcomes satisfy $A_0B_0=1$ and $A_1B_1=-1$; if the outcomes are predetermined
realistically, it should satisfy, $A_0B_1A_1B_0=-1$, but this contradicts the  rows $2$ and $3$ because there is  a nonzero  probability  
for $A_0B_1=A_1B_0=1$ or  $A_0B_1=A_1B_0=-1$. This argument is inspired by the Peres' version of KS paradox \cite{Peres}.
Notice that the following maximally local and correlated box,
\begin{equation}
P_{CC} = \left( \begin{array}{cccc}
\half & 0 & 0 & \half \\
0 & \half & \half & 0 \\
\half & 0 & 0 & \half \\
0 & \half & \half &  0
\end{array} \right),
\label{eq:ccbox}
\end{equation}
does not exhibit the above logical contradiction.
We call a maximally local and correlated box which exhibits the logical contradiction with the realistic value assignment Mermin box.  

For incompatible measurements that lead to the maximal violation of the EPR-steering inequality in Eq. (\ref{eprst}), the nonmaximally entangled states in Eq. (\ref{nmE})
give rise to isotropic Mermin box which is a convex mixture of the Mermin box in Eq. (\ref{eq:merminbox}) and white noise,
\be
P=p P_M+(1-p)P_N, \label{Mmot}
\ee
with $p=\sin2\theta$. Analogous to the isotropic PR-box,
the isotropic Mermin box arising from the pure entangled states, $\ket{\psi(\theta)}$, violates the EPR-steering inequality  
if $\sin2\theta>\frac{1}{\sqrt{2}}$, however,
it has the irreducible Mermin box component whenever the state is entangled.
The observation that the isotropic Mermin box can arise from incompatible measurements on the entangled states motivates to define a
notion of nonclassicality which we call Mermin discord.
\begin{definition}\label{MDdef}
A correlation arising from incompatible measurements performed on a given two-qubit state 
has \textit{Mermin discord} 
iff it admits a decomposition with an irreducible Mermin box component.  
\end{definition}
We observe that the isotropic Mermin box can exhibit EPR-steering only when the Mermin box component 
is larger than a certain amount. Thus, analogous to the statement that Bell discord and nonlocality are inequivalent, 
we have the observation that Mermin discord is not equivalent to EPR-steering.   

The definitions \ref{BDdef1} and \ref{MDdef} imply that an isotropic PR-box which has Bell discord does not have Mermin discord, 
and an isotropic Mermin box which has Mermin discord does not have  
Bell discord. Thus, Bell discord and Mermin discord are two different nonclassical features of quantum correlations which go beyond
nonlocality; the former originates
from nonlocality, whereas the latter originates from EPR-steering. 

\begin{observation}
Quantum correlations can have Bell and Mermin discord simultaneously. 
\end{observation}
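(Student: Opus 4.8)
The plan is to exhibit an explicit quantum correlation sitting in the region of the NS polytope where both $\mathrm{BD}>0$ and $\mathrm{MD}>0$. The natural arena is the Bell state $\ket{\psi^+}$: every projective measurement on it has uniform marginals, so the box is determined by the four correlators $E_{ij}=\braket{A_iB_j}$, and as the measurement directions range over the Bloch sphere these correlators fill a full-dimensional subset of the correlator cube. The two ``extreme'' cases are already recorded above — the Tsirelson configuration of Observation~1 produces an isotropic PR-box (Bell discord only, no Mermin discord, as noted after Definition~\ref{MDdef}), and the EPR-steering configuration of Eq.~(\ref{eprst}) produces an isotropic Mermin box (Mermin discord only, no Bell discord). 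So I would keep $A_0=\sigma_x$, $A_1=\sigma_y$ and rotate Bob's measurement pair $(B_0,B_1)$ to settings lying strictly between these two configurations, and compute the resulting correlators $E_{ij}$.

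For the box $P$ so obtained I would then write down a $3$-decomposition $P=\alpha P_{PR}+\beta P_M+\gamma P_0$ with $\alpha,\beta,\gamma\ge0$ and $\alpha+\beta+\gamma=1$, where $P_{PR}$ is the PR-box maximally violating the relevant CHSH expression, $P_M$ a Mermin box of the form of Eq.~(\ref{eq:merminbox}), and $P_0$ a box with $\mathrm{BD}=\mathrm{MD}=0$ (built from the $P_{CC}$-type and $P_N$ boxes), choosing the rotation angle so that \emph{both} $\alpha>0$ and $\beta>0$. By Definitions~\ref{BDdef1} and \ref{MDdef}, exhibiting an irreducible PR-box component and an irreducible Mermin-box component in $P$ shows that $P$ has both Bell and Mermin discord; the underlying measurements are manifestly incompatible on each side (in line with Observation~1), so $P$ is a bona fide quantum correlation of the required kind.

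The step I expect to be the main obstacle is the one hidden in the word \emph{irreducible}: the $3$-decomposition is not unique, so producing one decomposition with $\alpha,\beta>0$ is not yet enough — one must show the PR-weight and the Mermin-weight cannot \emph{both} be driven to zero, i.e.\ that $\mathrm{BD}(P)>0$ and $\mathrm{MD}(P)>0$ for the canonical (minimal-weight) decomposition. Using the characterizations to be established in Sec.~\ref{BDaMD} — $\mathrm{BD}$ detecting proximity to a PR-box face via a CHSH-type functional, $\mathrm{MD}$ detecting the Mermin-type structure via a Mermin-type functional — this reduces to evaluating one functional of each type on $P$ and checking that both fall strictly in the nonclassical range, equivalently that $P$ lies outside both the $\mathrm{BD}=0$ region and the $\mathrm{MD}=0$ region. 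Simply interpolating between the ``pure PR'' and ``pure Mermin'' endpoints does not automatically do this, since these regions are not half-spaces and a path joining them may leave one before it enters the other; so the real content is to compute the two functionals along the family and pin down an open interval of angles on which both are nonzero.
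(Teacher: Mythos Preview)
Your proposal is essentially the paper's own proof: fix $A_0=\sigma_x$, $A_1=\sigma_y$ on the Bell state $\ket{\psi^+}$ and interpolate Bob's pair between the Tsirelson and EPR-steering settings (the paper parametrizes this as $B_0=\sqrt{p}\,\sigma_x-\sqrt{1-p}\,\sigma_y$, $B_1=\sqrt{1-p}\,\sigma_x+\sqrt{p}\,\sigma_y$), obtaining the explicit $3$-decomposition $P=\mu P_{PR}+\nu P_M+(1-\mu-\nu)P_N$ with $\mu=\sqrt{1-p}$, $\nu=\sqrt{p}-\sqrt{1-p}$, so both components are strictly positive for $\tfrac{1}{2}<p<1$. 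Your worry about irreducibility is legitimate but the paper does not argue it here either --- it simply asserts it, and the confirmation via the measures $\mathcal{G}=4\sqrt{1-p}$ and $\mathcal{Q}=2(\sqrt{p}-\sqrt{1-p})$ is deferred to Sec.~\ref{QC}; note also that since the marginals are maximally mixed the third term is just white noise $P_N$, so no $P_{CC}$ component is needed.
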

\begin{proof}
For the measurements:
$A_0=\sigma_x$, $A_1=\sigma_y$, $B_0=\sqrt{p}\sigma_x-\sqrt{1-p}\sigma_y$ $\&$ $B_1=\sqrt{1-p}\sigma_x+\sqrt{p}\sigma_y$, where $\frac{1}{2}\le p \le1$, 
the correlations arising from the Bell state, $\ket{\psi^+}$, can be decomposed into PR-box, the Mermin box and white noise,
\be
P=\mu P_{PR}+\nu P_M+(1-\mu-\nu)P_N, \label{MeCNL}
\ee 
where $\mu=\sqrt{1-p}$ and $\nu =\sqrt{p}-\sqrt{1-p}$. These correlations 
have the components of irreducible PR-box and Mermin-box when $\frac{1}{2}< p <1$.
\end{proof}
We call a decomposition of a quantum correlation that has the fractions of PR-box and Mermin-box, and, a part that does not have
Bell and Mermin discord $3$-decomposition, for instance,  
the box arising from the maximally entangled state in Eq. (\ref{MeCNL}) has a $3$-decomposition. 
The isotropic PR-box and Mermin box given by the decompositions in Eqs. (\ref{PRiso}) and (\ref{Mmot}) 
are special instances of the $3$-decomposition in that in these two decompositions 
one of the nonclassical terms is zero.
We will obtain a $3$-decomposition of any NS box by using the geometry of the NS polytope
with respect to the measures of Bell discord and Mermin discord which we will define.
\section{Polytope of nonsignaling boxes}\label{NSp}
Barrett \etal\cite{Barrett} have shown that the set of bipartite nonsignaling boxes ($\mathcal{N}$) with two-binary-inputs-two-binary-outputs 
forms an $8$ dimensional convex polytope with $24$ 
vertices. The vertices (or extremal boxes) of this polytope 
are $8$ PR-boxes,
\be
P^{\alpha\beta\gamma}_{PR}(a_m,b_n|A_i,B_j)=\left\{
\begin{array}{lr}
\frac{1}{2}, & m\oplus n=i\cdot j \oplus \alpha i\oplus \beta j \oplus \gamma\\ 
0 , & \text{otherwise}\\
\end{array}
\right. \label{NLV}
\ee
and $16$ deterministic boxes:
\be
P^{\alpha\beta\gamma\epsilon}_D(a_m,b_n|A_i,B_j)=\left\{
\begin{array}{lr}
1, & m=\alpha i\oplus \beta\\
   & n=\gamma j\oplus \epsilon \\
0 , & \text{otherwise}.\\
\end{array}
\right.
\ee
Here $\alpha,\beta,\gamma,\epsilon\in \{0,1\}$  and $\oplus$ denotes addition modulo $2$. Any NS correlation can be written as a convex sum of the $24$ extremal boxes:
 \be
P(a_m, b_n|A_i,B_j)=\sum^7_{k=0}p_kP^k_{PR}+\sum^{15}_{l=0}q_lP^l_{D};
\sum_kp_k+\sum_lq_l=1,     \label{CHNS}    
\ee
here $k=\alpha\beta\gamma$ and $l=\alpha\beta\gamma\epsilon$.
All the deterministic boxes can be written as the product of marginals corresponding to Alice and Bob, $P_D(a_m,b_n|A_i,B_j)=P_D(a_m|A_i)P_D(b_n|B_j)$, 
whereas the $8$ PR-boxes
cannot be written in product form. Note that unlike the deterministic boxes, the marginals of the PR boxes are maximally mixed: {\it i.e.}, 
$P(a_{m}|A_i)=\frac{1}{2}=P(b_{n}|B_j)$ for all $i,j,m,n$. 
The extremal boxes in a given class are related to each other through local reversible operations (LRO). 
LRO simply relabel the inputs and outputs such that the class of the vertices remain invariant: 
Alice changing her input $i\rightarrow i\oplus 1$, and changing her output conditioned on the input: $m\rightarrow m\oplus\alpha i\oplus\beta$. 
Bob can perform similar operations. 
Thus, the extremal boxes in a given class are equivalent under LRO.
\begin{figure}
\centering
\includegraphics[scale=0.30]{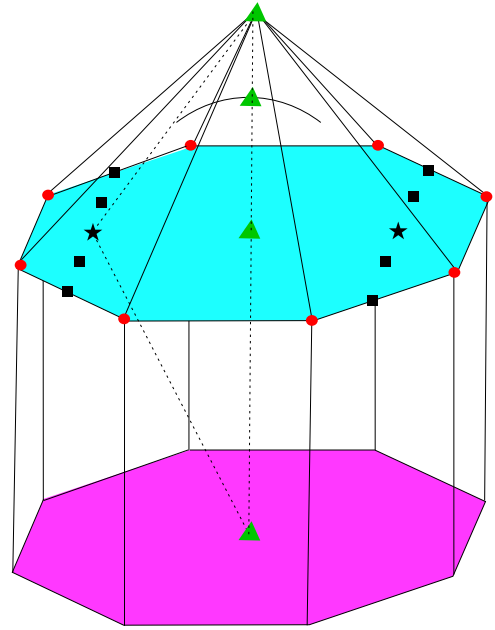} 
\caption{A three-dimensional representation of the NS polytope with two binary inputs and two binary outputs is shown here. The octagonal cylinder  represents the 
local polytope. The lines connecting the deterministic boxes represented by
red points define one of the facet for the local polytope; 
the PR-box which violates the Bell-CHSH inequality corresponding 
to this facet is represented by triangle point on the top of the NS polytope. 
The region below the curved surface contains quantum correlations and the point on this curved surface is the Tsirelson box.
The star and square points on the facet of the local polytope represent maximally and nonmaximally mixed marginals Mermin boxes, respectively. 
The triangular region (shown by dotted lines) which is a convex hull
of the PR-box, the Mermin box and white noise represents the $3$-decomposition fact that  
any point which lies inside the triangle can be decomposed into PR-box,
the Mermin-box and white noise. The line connecting the PR-box and white noise represents the isotropic PR-box and the line connecting the Mermin
box and white noise represents the isotropic Mermin box.}\label{NS3dfig}
\end{figure} 

Bell polytope ($\mathcal{L}$), which is a subpolytope of $\mathcal{N}$, is a convex hull of the $16$ deterministic boxes: if $P(a_m, b_n|A_i,B_j)\in \mathcal{L}$,   
\ba
P(a_m, b_n|A_i,B_j)=\sum^{15}_{l=0}q_lP^l_{D}; \sum_lq_l=1. \label{LD}
\ea
Fine \cite{Fine} showed that a correlation can be simulated by the local hidden variable model in Eq. (\ref{LD}) iff the correlation 
satisfies the complete set of Bell-CHSH inequalities \cite{WernerWolf}:
\ba
\mathcal{B}_{\alpha\beta\gamma} &:= &(-1)^\gamma\braket{A_0B_0}+(-1)^{\beta \oplus \gamma}\braket{A_0B_1}\nonumber\\
&+&(-1)^{\alpha \oplus \gamma}\braket{A_1B_0}+(-1)^{\alpha \oplus \beta \oplus \gamma \oplus 1} \braket{A_1B_1}\le2, \label{BCHSH}
\ea   
which are the nontrivial facets of the Bell polytope. All nonlocal correlations lie outside the Bell polytope and violate a Bell-CHSH inequality.
\section{The two measures and $3$-decomposition of NS boxes}\label{BDaMD}
\subsection{Bell discord}
The observation that each Bell-CHSH inequality is violated to the algebraic maximum  by only one PR-box 
and a nonlocal correlation cannot violate more than a Bell-CHSH inequality suggests the
trade-off between the 
Bell functions,
\ba
\mathcal{B}_{\alpha\beta} &:=&|\braket{A_0B_0}+(-1)^{\beta }\braket{A_0B_1}+(-1)^{\alpha}\braket{A_1B_0}\nonumber \\ 
&&+(-1)^{\alpha \oplus \beta  \oplus 1} \braket{A_1B_1}|. \label{MBF}
\ea
\begin{observation}
For any given nonsignaling box, $P(a_m,b_n|A_i,B_j)$, the Bell functions in Eq. (\ref{MBF}) satisfy the monogamy relationship, 
\be
\mathcal{B}_{00}+\mathcal{B}_{j}\le4, \quad \forall j=01,10,11. \label{BFm}
\ee 
\end{observation}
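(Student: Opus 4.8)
The plan is to reduce the claimed monogamy to an elementary inequality for absolute values; in fact the nonsignaling constraints will not actually be needed, only the boundedness of the two-party correlators. Writing $E_{ij}:=\braket{A_iB_j}$ and recalling $a_m,b_n\in\{-1,1\}$, one has $E_{ij}=\sum_{mn}a_mb_nP(a_m,b_n|A_i,B_j)\in[-1,1]$ for every box. So the whole argument will live at the level of the four numbers $E_{00},E_{01},E_{10},E_{11}\in[-1,1]$.

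First I would line up the sign patterns of $\mathcal{B}_{00}$ and of $\mathcal{B}_j$ for $j=01,10,11$. In the ordered tuple $(E_{00},E_{01},E_{10},E_{11})$, $\mathcal{B}_{00}$ carries signs $(+,+,+,-)$, while $\mathcal{B}_{01},\mathcal{B}_{10},\mathcal{B}_{11}$ carry $(+,-,+,+)$, $(+,+,-,+)$, $(+,-,-,-)$; thus each $\mathcal{B}_j$ differs from $\mathcal{B}_{00}$ in exactly two of the four signs. Grouping the two correlators whose signs coincide into a quantity $x$ and the two whose signs are flipped into a quantity $y$, one obtains, for each of the three cases,
\begin{equation}
\mathcal{B}_{00}=|x+y|,\qquad \mathcal{B}_{j}=|x-y|,
\end{equation}
with $x=E_{00}+E_{10}$, $y=E_{01}-E_{11}$ for $j=01$; $x=E_{00}+E_{01}$, $y=E_{10}-E_{11}$ for $j=10$; and $x=E_{00}-E_{11}$, $y=E_{01}+E_{10}$ for $j=11$. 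In every case $x$ and $y$ are each a combination of two of the correlators with coefficients $\pm1$, so by the triangle inequality $|x|\le2$ and $|y|\le2$.

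Then I would finish with the identity $|x+y|+|x-y|=2\max(|x|,|y|)$, valid for all real $x,y$, which gives $\mathcal{B}_{00}+\mathcal{B}_{j}=2\max(|x|,|y|)\le4$, as claimed. The only place calling for a little care is the bookkeeping in the second step — choosing, for each $j$, the pairing of correlators that turns $\mathcal{B}_{00}$ and $\mathcal{B}_j$ into $|x+y|$ and $|x-y|$ — but this is dictated by the sign patterns above, so I do not expect a genuine obstacle. It may be worth noting in passing that the bound is tight: equality forces $|x|=2$ or $|y|=2$ with $x,y$ of equal sign, in which case one of $\mathcal{B}_{00},\mathcal{B}_j$ equals $4$ and the other $0$, matching the fact that a box can violate at most one Bell-CHSH inequality to the algebraic maximum.
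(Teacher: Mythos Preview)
Your proof of the inequality is correct and more elementary than the paper's. The paper argues geometrically: on the local polytope each $\mathcal{B}_{\alpha\beta}\le2$ so the sum is at most $4$; for nonlocal boxes it invokes the decomposition $P_{NL}=pP^{\alpha\beta\gamma}_{PR}+(1-p)P_L$ with $P_L$ on a Bell facet, and then checks the extremals. Your route bypasses the polytope structure entirely: you pair the four correlators so that $\mathcal{B}_{00}=|x+y|$ and $\mathcal{B}_j=|x-y|$ with $|x|,|y|\le2$, and conclude via $|x+y|+|x-y|=2\max(|x|,|y|)$. This is cleaner, uses only $|E_{ij}|\le1$ (so it holds for all boxes, nonsignaling or not), and makes the convexity that is implicit in the paper's argument unnecessary.

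One slip, though: your closing remark about equality is not right. From $2\max(|x|,|y|)=4$ you get, say, $|x|=2$, but then $|x+y|$ and $|x-y|$ can be any pair summing to $4$, not just $4$ and $0$. Concretely, the paper's own saturating example $P=pP^{000}_{PR}+(1-p)P^{0000}_D$ has $E_{00}=E_{01}=E_{10}=1$, $E_{11}=1-2p$, giving $\mathcal{B}_{00}=2+2p$ and $\mathcal{B}_{j}=2-2p$ for each $j$; the bound is met for every $0\le p\le1$, not only at the PR-box endpoint. So drop (or correct) that parenthetical, and the argument is complete.
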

\begin{proof}
Since $\mathcal{B}_{\alpha\beta}\le2$ for all the local boxes, the trade-off relations in Eq. (\ref{BFm}) are satisfied by any box in the Bell polytope.
It is obvious that all the eight PR-boxes satisfy the trade-off since for any PR-box only one of the Bell functions attains the value $4$ and the rest of them are zero.
Geometrically, any box in the nonlocal region (see fig. \ref{NS3dfig}) lies on a line joining a PR-box and a Bell-local box which lies on the facet 
of the local polytope, i.e.,
any nonlocal box can be decomposed as follows, 
\be
P_{NL}=pP^{\alpha\beta\gamma}_{PR}+(1-p)P_L, \label{GNLd}
\ee
where $P_L$ gives the local bound of a Bell-CHSH inequality.
Now we consider the nonlocal boxes which maximize the
left-hand side of the trade-off in Eq. (\ref{BFm}); for instance, any convex mixture of the PR-box
and the deterministic box, $P=pP^{000}_{PR}+(1-p)P^{0000}_D$,  
gives $\mathcal{B}_{00}+\mathcal{B}_{j}=4$, $\forall j=01,10,11$.
\end{proof}
The Bell function monogamy given in Eq. (\ref{BFm}) refers to the monogamy of a given correlation with respect to the different Bell-CHSH inequalities, 
whereas the conventional monogamy refers to the 
monogamy of a given Bell-type inequality with respect to the different marginal correlations of a given multipartite correlation \cite{Bellmono}. 

We observe that the Bell function monogamy of a PR-box does not vanish for local boxes also if they have an irreducible PR-box component, for instance, 
any isotropic PR-box,
\be
P=pP^{\alpha\beta\gamma}_{PR}+(1-p)P_N, \label{isoPR}
\ee
has a special property that only one of the Bell functions is nonzero which is due to the irreducible PR-box, $P^{\alpha\beta\gamma}_{PR}$, in the decomposition. 
Thus, this property quantifies Bell discord of the isotropic PR-boxes.

We exploit the Bell function monogamy of the extremal boxes to define the measure of Bell discord which quantifies the irreducible PR-box component in any box. 
Before defining Bell discord we construct the following quantities,
\ba 
\mathcal{G}_1&:=&\Big||\mathcal{B}_{00}-\mathcal{B}_{01}|-|\mathcal{B}_{10}-\mathcal{B}_{11}|\Big|\nonumber\\
\mathcal{G}_2&:=&\Big||\mathcal{B}_{00}-\mathcal{B}_{10}|-|\mathcal{B}_{01}-\mathcal{B}_{11}|\Big| \\
\mathcal{G}_3&:=&\Big||\mathcal{B}_{00}-\mathcal{B}_{11}|-|\mathcal{B}_{01}-\mathcal{B}_{10}|\Big| \nonumber.
\ea
Here $\mathcal{G}_i$ are constructed such that it satisfies the following properties: (i) positivity i.e., $\mathcal{G}_i\ge0$, (ii) $\mathcal{G}_i=0$ for all the
deterministic boxes and (iii) the algebraic maximum of $\mathcal{G}_i$ is achieved by the PR-boxes i.e., $\mathcal{G}_i=4$ for any PR-box. 
\begin{definition}\label{BDdef}
Bell discord, $\mathcal{G}$, is defined as,
\begin{equation}
\mathcal{G} := \min_i \mathcal{G}_i. \label{defBD}
\end{equation}
Here $0\le\mathcal{G}\le4$. 
\end{definition}
Bell discord is clearly invariant under LRO and 
interchange of the subsystems since the set $\{\mathcal{G}_i, i=1,2,3\}$ is invariant under these two transformations. Therefore,
a $\mathcal{G}>0$ box cannot be transformed into a $\mathcal{G}=0$ box by LRO and vice versa.
Before characterizing the $\mathcal{G}>0$ boxes we make the following two observations.   
\begin{observation}
The set of local boxes that have $\mathcal{G}=0$ forms a subset of the set of all local boxes and is nonconvex.
\end{observation}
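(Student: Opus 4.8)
The plan is to verify the two assertions in turn. That the set is contained in the set of all local boxes is immediate, since it is obtained from the local polytope $\mathcal{L}$ by imposing the extra constraint $\mathcal{G}=0$. To see the inclusion is \emph{proper} I would point to a Bell-local box with $\mathcal{G}>0$; the natural candidate, already discussed in the text, is the isotropic PR-box $P=pP_{PR}+(1-p)P_N$ of Eq.~(\ref{isoPR}) with $0<p\le\frac{1}{2}$. Evaluating the Bell functions of Eq.~(\ref{MBF}) on this box gives $\mathcal{B}_{00}=4p$ and $\mathcal{B}_{01}=\mathcal{B}_{10}=\mathcal{B}_{11}=0$, so it satisfies every Bell-CHSH inequality (as $4p\le2$) and is therefore local, while $\mathcal{G}_1=\mathcal{G}_2=\mathcal{G}_3=4p$, hence $\mathcal{G}(P)=4p>0$.

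For the nonconvexity, the idea is to produce a single finite convex combination of $\mathcal{G}=0$ local boxes whose value of $\mathcal{G}$ is strictly positive; a convex set must contain every finite convex combination of its members, so this suffices. By property~(ii) of the $\mathcal{G}_i$ all sixteen deterministic boxes $P^l_D$ have $\mathcal{G}=0$, and being vertices of the Bell polytope they are local, so each lies in the set. One route is then to reuse the local isotropic PR-box above with $p=\frac{1}{2}$: being Bell-local it can, by Eq.~(\ref{LD}), be expanded as a convex combination $\sum_l q_l P^l_D$ of deterministic boxes, i.e.\ of members of the set, yet it has $\mathcal{G}=2>0$. A fully self-contained witness that avoids invoking this abstract expansion is the equal mixture $\frac{1}{3}(P^{0000}_D+P^{1000}_D+P^{0010}_D)$: its two-point correlators are $\braket{A_0B_0}=1$ and $\braket{A_0B_1}=\braket{A_1B_0}=-\braket{A_1B_1}=\frac{1}{3}$, which yield $\mathcal{B}_{00}=2$, $\mathcal{B}_{01}=\mathcal{B}_{10}=\mathcal{B}_{11}=\frac{2}{3}$, hence $\mathcal{G}=\frac{4}{3}>0$, even though each of the three summands is a deterministic box with $\mathcal{G}=0$. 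Either way the set is not closed under convex combination and is therefore nonconvex.

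The calculations involved are routine---evaluating $\mathcal{B}_{\alpha\beta}$ on a box with prescribed correlators and substituting into $\mathcal{G}_1,\mathcal{G}_2,\mathcal{G}_3$---so I do not anticipate a genuine obstacle. The only point worth stating carefully, and the reason the self-contained witness needs at least three deterministic boxes, is that any convex combination of only \emph{two} deterministic vertices produces a symmetric Bell-function profile: the four $\mathcal{B}_{\alpha\beta}$ split into the two pairs of one of the three pairings used by $\mathcal{G}_1,\mathcal{G}_2,\mathcal{G}_3$, with coinciding entries within each pair, which forces $\mathcal{G}=0$. What makes $\mathcal{G}$ positive is choosing vertices whose mixture ``points towards'' a PR-box, and the isotropic-PR-box description of the first paragraph is precisely what makes such a choice transparent.
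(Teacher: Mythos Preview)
Your proposal is correct and follows essentially the same approach as the paper: both argue nonconvexity by noting that the isotropic PR-box with $0<p\le\tfrac{1}{2}$ is a convex combination of deterministic boxes (each with $\mathcal{G}=0$) yet has $\mathcal{G}=4p>0$, and both deduce proper containment from the existence of such $\mathcal{G}>0$ local boxes. Your explicit three-deterministic-box witness and the accompanying remark that two-box mixtures always yield $\mathcal{G}=0$ are correct additions that make the argument self-contained, but they go beyond what the paper itself supplies.
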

\begin{proof}
The set of $\mathcal{G}=0$ boxes is nonconvex since certain convex combination of the $\mathcal{G}=0$ boxes can have $\mathcal{G}>0$; for instance, 
the boxes in Eq. (\ref{isoPR}) can be written as a
convex combination of the deterministic boxes when $p\le\frac{1}{2}$, however, it has Bell discord $\mathcal{G}=4p>0$ if $p>0$.
As the extremal boxes of the Bell polytope have $\mathcal{G}=0$ and the Bell polytope contains $\mathcal{G}>0$ boxes, the set of $\mathcal{G}=0$
boxes form a subset of the local boxes. 
\end{proof}
\begin{figure}
\centering
\includegraphics[scale=0.40]{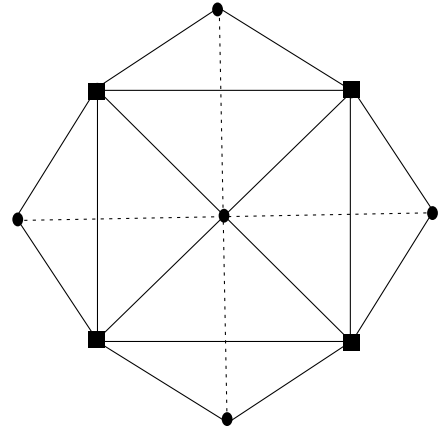} 
\caption{A two-dimensional representation of the NS polytope is shown here. Square represents the local polytope whose
vertices denoted by square points represent the deterministic boxes.
The circular points which lie above the local polytope represent the PR-boxes.
The points which lie on the lines connecting the center of the NS polytope (white noise) and the square points forms $\mathcal{G}=0$ nonconvex polytope.
Any point that goes outside the $\mathcal{G}=0$ region lies on a line joining a PR-box and a $\mathcal{G}=0$ box; for instance, any point that lies
on the dotted line can be written as a convex mixture of a PR-box and white noise.
}\label{Gpolytope}
\end{figure} 

\begin{observation}\label{umPR}
The unequal mixture of any two PR-boxes: $pP^i_{PR}+qP^j_{PR}$, here $p>q$, can be written as the mixture of an irreducible PR-box and a Bell-local box.
\end{observation}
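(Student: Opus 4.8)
The plan is to reduce the statement to one clean sub-fact and then to a one-line rearrangement. Write the mixture as a convex combination, $p+q=1$; then $p>q$ gives $p-q>0$, $2q>0$, and $(p-q)+2q=1$, so
\be
pP^i_{PR}+qP^j_{PR}=(p-q)\,P^i_{PR}+2q\cdot\half\big(P^i_{PR}+P^j_{PR}\big).
\ee
This exhibits $pP^i_{PR}+qP^j_{PR}$ as a convex combination $\lambda P^i_{PR}+(1-\lambda)P_L$ with $\lambda=p-q\in(0,1)$, $P^i_{PR}$ a single (irreducible) PR-box, and $P_L:=\half(P^i_{PR}+P^j_{PR})$. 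Thus everything rests on the claim that \emph{the equal mixture of two distinct PR-boxes is Bell-local}, and the rest of the plan is devoted to proving that.

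For the sub-fact I would first extract two elementary facts about a PR-box from Eq.~(\ref{NLV}): its single-party marginals are uniform, and its correlator vector $\vec v=(\braket{A_0B_0},\braket{A_0B_1},\braket{A_1B_0},\braket{A_1B_1})$ lies in $\{+1,-1\}^4$ with $v_1v_2v_3v_4=-1$ (the sign of $\braket{A_iB_j}$ is $(-1)^{ij\oplus\alpha i\oplus\beta j\oplus\gamma}$, and the XOR of these four exponents is $1$). From the second fact I would deduce that, among the four CHSH expressions of Eq.~(\ref{MBF}), exactly one equals $4$ on a given PR-box while the other three vanish: a signed sum $\sum_ks_kv_k$ with $s_k\in\{\pm1\}$ reaches $\pm4$ only for $\vec s=\pm\vec v$, which matches the sign pattern of exactly one $\mathcal{B}_{\alpha\beta}$, and for any other admissible pattern the bits $s_kv_k$ contain an even number of $-1$'s, hence (not being all equal) exactly two, so that sum is $0$. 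Call the index of this unique nonzero CHSH expression the \emph{active index} of the PR-box; one checks directly that $P^{\alpha\beta\gamma}_{PR}$ has active index $\alpha\beta$, so there are exactly two PR-boxes per active index, related by $\gamma\mapsto\gamma\oplus1$.

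Now let $P^i_{PR}\neq P^j_{PR}$ and put $P_L=\half(P^i_{PR}+P^j_{PR})$. If the two share an active index they differ only in $\gamma$, so by Eq.~(\ref{NLV}) they are supported, for every input pair, on complementary pairs of outcomes with equal weights; hence $P_L=P_N$ of Eq.~(\ref{eq:wn}), which is manifestly in the Bell polytope. If their active indices differ, use that the signed CHSH expressions (the $\mathcal{B}_{\alpha\beta}$ of Eq.~(\ref{MBF}) with the absolute value removed) are linear in the correlators, so their values on $P_L$ are the means of their values on $P^i_{PR}$ and $P^j_{PR}$, each of which lies in $\{-4,0,+4\}$ by the previous paragraph. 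Hence $P_L$ takes CHSH value $\pm2$ at each of the two active indices and $0$ at the remaining two; in particular it satisfies every Bell-CHSH inequality, so $P_L\in\mathcal{L}$ by the completeness of the inequalities in Eq.~(\ref{BCHSH}). Either way $P_L$ is Bell-local, which together with the displayed identity finishes the proof.

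The only genuine work is the second case, where one must be careful to run the linearity argument on the signed CHSH expressions rather than on the absolute-valued $\mathcal{B}_{\alpha\beta}$, and then appeal to completeness of the CHSH inequalities for $\mathcal{L}$. I would close with a remark that the split is canonical: the Bell functions of $pP^i_{PR}+qP^j_{PR}$ form the multiset $\{4p,4q,0,0\}$ (or $\{4(p-q),0,0,0\}$ when the active indices coincide), from which a short computation gives Bell discord $\mathcal{G}=4(p-q)$ in Definition~\ref{BDdef} while $\mathcal{G}(P_L)=0$; hence the extracted PR-box is irreducible in the strongest sense and the weight $\lambda=p-q$ cannot be increased.
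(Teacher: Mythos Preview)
Your proof is correct and uses the same one-line rearrangement as the paper, $pP^i_{PR}+qP^j_{PR}=(p-q)P^i_{PR}+2q\cdot\half(P^i_{PR}+P^j_{PR})$. The paper stops there, simply asserting without argument that the uniform mixture of any two PR-boxes ``does not violate a Bell-CHSH inequality''; you supply that missing justification via the correlator/active-index analysis and an appeal to the completeness of the CHSH inequalities (Fine), and your closing computation $\mathcal{G}=4(p-q)$, $\mathcal{G}(P_L)=0$ anticipates the linearity discussion the paper places after Result~\ref{thm1}.
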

\begin{proof}
$pP^i_{PR}+qP^j_{PR}=(p-q)P^i_{PR}+2qP^{ij}_l$. Here $P^{ij}_l=\frac{1}{2}(P^i_{PR}+P^j_{PR})$ is a Bell-local box since uniform mixture of any two PR-boxes
does not violate a Bell-CHSH inequality.  
Notice that the second PR-box, $P^j_{PR}$,  
in the unequal mixture is not irreducible as it can vanish with the first PR-box in the other possible decomposition by the uniform mixture. 
\end{proof}
\begin{observation}\label{Girre}
$\mathcal{G}$ calculates the irreducible PR-box component in the mixture of the $8$ PR-boxes: $\sum^7_{k=0} p_k P^k_{PR}$ given in Eq. (\ref{CHNS}). 
\end{observation}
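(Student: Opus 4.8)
The plan is to reduce the mixture $\sum_{k=0}^{7}p_k P^k_{PR}$ to a canonical form that contains a single ``irreducible'' PR-box, read off its weight, and verify that $\mathcal{G}$ reproduces four times that weight. Write $k=\alpha\beta\gamma$ and denote the weights $p_{\alpha\beta\gamma}$. The first step is purely linear: a single PR-box satisfies $\mathcal{B}_{\alpha\beta 0}(P^{\alpha'\beta'\gamma'}_{PR})=4(-1)^{\gamma'}$ when $(\alpha\beta)=(\alpha'\beta')$ and vanishes otherwise (each PR-box reaches the algebraic maximum $4$ of exactly one Bell--CHSH expression and is ``orthogonal'' to the other three classes), so linearity of the correlators gives
\begin{equation}
\mathcal{B}_{\alpha\beta}\Big(\sum_{k}p_kP^k_{PR}\Big)=4\,\bigl|p_{\alpha\beta 0}-p_{\alpha\beta 1}\bigr|=:4\,r_{\alpha\beta},
\end{equation}
so the four numbers feeding $\mathcal{G}_1,\mathcal{G}_2,\mathcal{G}_3$ are $4r_{00},4r_{01},4r_{10},4r_{11}$.

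Next I would produce the canonical decomposition. Since $(P^{\alpha\beta 0}_{PR}+P^{\alpha\beta 1}_{PR})/2=P_N$, Observation~\ref{umPR} collapses each antipodal pair, $p_{\alpha\beta 0}P^{\alpha\beta 0}_{PR}+p_{\alpha\beta 1}P^{\alpha\beta 1}_{PR}=r_{\alpha\beta}P^{\alpha\beta\gamma^{\ast}_{\alpha\beta}}_{PR}+(\text{white noise})$; doing this for all four classes rewrites the box as $\sum_{\alpha\beta}r_{\alpha\beta}P^{\alpha\beta\gamma^{\ast}_{\alpha\beta}}_{PR}+wP_N$ with $\sum_{\alpha\beta}r_{\alpha\beta}+w=1$, i.e. a mixture of just four PR-boxes, one per Bell--CHSH class. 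Relabelling so that $r_{(1)}\ge r_{(2)}\ge r_{(3)}\ge r_{(4)}$, I would then use that the uniform mixture of any two distinct PR-boxes is Bell-local (the identity in the proof of Observation~\ref{umPR}) to cancel the two largest against each other and the two smallest against each other; what survives is $(r_{(1)}-r_{(2)})$ of one PR-box and $(r_{(3)}-r_{(4)})$ of another plus a Bell-local (hence $\mathcal{G}=0$) remainder, and a final uniform-mixture cancellation of those two leaves a single irreducible PR-box of weight
\begin{equation}
\mu^{\ast}=\bigl|(r_{(1)}-r_{(2)})-(r_{(3)}-r_{(4)})\bigr|.
\end{equation}

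To identify $\mathcal{G}$ with $4\mu^{\ast}$ I would invoke an elementary fact: for four sorted nonnegative numbers with gaps $a=r_{(1)}-r_{(2)}$, $b=r_{(2)}-r_{(3)}$, $c=r_{(3)}-r_{(4)}$, the three perfect matchings of $\{r_{(1)},r_{(2)},r_{(3)},r_{(4)}\}$ give the values $|a-c|$, $|a-c|$, $a+c$ for the quantity $\bigl|\,|r_i-r_j|-|r_k-r_l|\,\bigr|$; since $\mathcal{G}_1,\mathcal{G}_2,\mathcal{G}_3$ are exactly $4$ times these three matchings of $r_{00},r_{01},r_{10},r_{11}$, we get $\mathcal{G}=\min_i\mathcal{G}_i=4|a-c|=4\mu^{\ast}$. (As a check this yields $\mathcal{G}=2\sqrt{2}$ and irreducible content $1/\sqrt{2}$ for the Tsirelson box, matching the text.)

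The one point that still needs argument is that $\mu^{\ast}$ is genuinely \emph{irreducible}, i.e. that no decomposition $\sum_k p_k P^k_{PR}=\mu P^m_{PR}+(1-\mu)Y$ with $Y$ a valid box and $\mathcal{G}(Y)=0$ can have $\mu<\mu^{\ast}$. I would run such a decomposition through the same bookkeeping: with $m=\alpha_m\beta_m\gamma_m$, $\mathcal{B}_{\alpha\beta 0}$ of the left-hand side equals $4\mu(-1)^{\gamma_m}\delta_{(\alpha\beta),(\alpha_m\beta_m)}+(1-\mu)\mathcal{B}_{\alpha\beta 0}(Y)$, so three of the $\mathcal{B}_{\alpha\beta}$ of the mixture are $(1-\mu)$ times those of $Y$ while the fourth is shifted by $4\mu$; imposing that $Y$ satisfies one balanced-matching relation ($\mathcal{G}(Y)=0$) together with positivity of $Y$ then forces $\mu\ge\mu^{\ast}$. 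This is the main obstacle: it requires a short case analysis over which of the three matchings is the balanced one for $Y$ and over the sign of the shifted Bell function, whereas everything before it is linear bookkeeping plus the three-matching identity.
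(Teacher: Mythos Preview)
Your approach coincides with the paper's: reduce each antipodal pair via $(P^{\alpha\beta0}_{PR}+P^{\alpha\beta1}_{PR})/2=P_N$ to get $\mathcal{B}_{\alpha\beta}=4\,r_{\alpha\beta}$ with $r_{\alpha\beta}=|p_{\alpha\beta0}-p_{\alpha\beta1}|$, then identify $\min_i\mathcal{G}_i$ with (four times) the residual single PR-box weight after pairwise cancellation. The paper's argument is in fact terser than yours---it writes out $\mathcal{G}_1$ in terms of the $|p_k-p_{k+1}|$ and simply asserts that the minimum is the irreducible component, without your three-matching identity and without any minimality argument---so the gap you explicitly flag in your last paragraph is not closed by the paper either.
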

\begin{proof}
Notice that $P^{k+1}_{PR}$ is the anti-PR-box to $P^{k}_{PR}$ with $k=0,2,4,6$ since uniform mixture of these two PR-boxes gives white noise \cite{Short}.
The evaluation of $\mathcal{G}_1$ for the mixture of the $8$ PR-boxes gives,  
\be
\mathcal{G}_1\left(\sum_k p_k P^k_{PR}\right)=4|\Big||p_0-p_1|-|p_2-p_3|\Big|-\Big||p_4-p_5|-|p_6-p_7|\Big||.
\ee
The observation \ref{umPR} implies that the terms $|p_k-p_{k+1}|$ in this equation give the irreducible PR-box component 
in the mixture of the two PR-boxes whose equal mixture gives white noise. Thus, 
$\min_i\mathcal{G}_i\left(\sum_k p_k P^k_{PR}\right)$ gives the irreducible PR-box component in the mixture of 
the $4$ reduced components of the PR-boxes that does not contain any anti-PR-box.  
\end{proof}
\begin{observation}\label{nllg0}
Any NS box can be decomposed in a convex mixture of a nonlocal box and a local box with $\mathcal{G}=0$,
\be
P=\eta P_{NL}+ (1-\eta)P_L^{\mathcal{G}=0}. \label{cNlL}
\ee
\end{observation}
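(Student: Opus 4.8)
The plan is to start from the canonical form~(\ref{CHNS}) of an arbitrary box $P\in\mathcal{N}$ and to collapse its PR-box part to a single irreducible PR-box carrying a definite weight, everything else being forced into the Bell polytope $\mathcal{L}$. Group the eight PR-boxes into the four anti-PR-box pairs $(P^{2t}_{PR},P^{2t+1}_{PR})$, $t=0,1,2,3$, whose equal mixtures are white noise. Within each pair, replace $p_{2t}P^{2t}_{PR}+p_{2t+1}P^{2t+1}_{PR}$ by $|p_{2t}-p_{2t+1}|$ times its dominant member plus $2\min(p_{2t},p_{2t+1})P_N$, exactly as in Observation~\ref{umPR}. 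What survives is a mixture of at most four PR-boxes, no two of which are mutually anti, together with white noise and the sixteen deterministic boxes. Applying Observation~\ref{umPR} again to these surviving PR-boxes (pair them off greedily, each step emitting an equal two-PR-box mixture, which is Bell-local) reduces them to a single PR-box $P^{k_\star}_{PR}$ with weight $\eta$. By Observation~\ref{Girre} this weight is precisely $\eta=\tfrac{1}{4}\mathcal{G}(P)$, and every other ingredient produced along the way---equal mixtures of two PR-boxes, white noise, deterministic boxes---lies in $\mathcal{L}$. Hence $P=\eta\,P^{k_\star}_{PR}+(1-\eta)\,Q$ with $Q\in\mathcal{L}$.

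It then remains to check that $Q$, which plays the role of $P_L^{\mathcal{G}=0}$ in~(\ref{cNlL}) with $P_{NL}:=P^{k_\star}_{PR}$, actually has $\mathcal{G}(Q)=0$. Here one uses that the four signed CHSH combinations underlying~(\ref{MBF}) are affine in the box, while a PR-box saturates exactly one of them: $\mathcal{B}_{\alpha\beta}(P^{k_\star}_{PR})=4$ for the single pair $(\alpha,\beta)$ selected by $k_\star$ and $\mathcal{B}_{\alpha\beta}(P^{k_\star}_{PR})=0$ otherwise. Propagating this through $P=\eta P^{k_\star}_{PR}+(1-\eta)Q$ expresses each $\mathcal{B}_{\alpha\beta}(Q)$ in terms of $\mathcal{B}_{\alpha\beta}(P)$ and $\eta$; substituting these into $\mathcal{G}_1,\mathcal{G}_2,\mathcal{G}_3$ and using $\mathcal{G}(P)=4\eta$, which constrains the difference structure of the $\mathcal{B}_{\alpha\beta}(P)$, forces at least one $\mathcal{G}_i(Q)$ to vanish, so $\mathcal{G}(Q)=\min_i\mathcal{G}_i(Q)=0$.

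If $\mathcal{G}(P)=0$ to begin with, then $\eta=0$ and $P=Q\in\mathcal{L}$ is already a local box with $\mathcal{G}=0$, so~(\ref{cNlL}) holds trivially with $1-\eta=1$; in particular no nonlocal ingredient is needed in that degenerate case. Otherwise $P^{k_\star}_{PR}$ is a bona fide nonlocal box, completing~(\ref{cNlL}); if one wishes $P_{NL}$ to be a generic nonlocal box of the form~(\ref{GNLd}) rather than a bare PR-box, one may absorb part of $Q$ into it, e.g.\ turning it into an isotropic PR-box~(\ref{isoPR}).

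The step I expect to be the real obstacle is the second paragraph: $\mathcal{G}$ is a minimum of absolute-value combinations and is not affine, so ``removing the irreducible PR-box content'' does not, on its face, leave a $\mathcal{G}=0$ remainder. One must track the signs and magnitudes of the four $\mathcal{B}_{\alpha\beta}(P)$ carefully, handle the cases where $\min_i\mathcal{G}_i$ is attained at more than one $i$ or where several surviving PR-boxes have equal weight, and verify that the weight bookkeeping is consistent with $\mathcal{G}=\min_i\mathcal{G}_i$ rather than with any single $\mathcal{G}_i$. Everything else is the routine convex-combination algebra already illustrated by~(\ref{GNLd}) and Observation~\ref{umPR}.
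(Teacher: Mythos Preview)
Your approach differs substantially from the paper's. The paper's proof of Observation~\ref{nllg0} is a two-line geometric argument: convexity of $\mathcal{N}$ together with $\mathcal{L}\subset\mathcal{N}$ puts any NS box on a segment joining a nonlocal box to a local one, and since the Bell polytope splits into a $\mathcal{G}>0$ region and a $\mathcal{G}=0$ region, the local endpoint may be taken in the latter. No explicit reduction of PR-box content is performed at this stage. What you have written is, up to details, the paper's proof of the \emph{next} statement, Result~\ref{thm1}. In the paper's logical order, Observation~\ref{nllg0} is established first by the geometric argument and then invoked \emph{inside} the proof of Result~\ref{thm1} precisely to conclude that the constructed local remainder has $\mathcal{G}=0$---exactly the step you correctly flag as the obstacle. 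So you have inverted the dependency: the paper uses this observation to sidestep the sign-tracking you worry about, whereas you are trying to establish the observation by that very sign-tracking.

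On the specific gap you identify: your appeal to Observation~\ref{Girre} for $\eta=\tfrac{1}{4}\mathcal{G}(P)$ is not justified as stated, because Observation~\ref{Girre} applies only to pure PR-box mixtures $\sum_k p_k P^k_{PR}$, while the deterministic part $\sum_l q_l P^l_D$ of a general box contributes nontrivially to each $\mathcal{B}_{\alpha\beta}$. In the paper the identity $\mathcal{G}(P)=4\mu$ for arbitrary NS boxes is deduced only \emph{after} Result~\ref{thm1}, via linearity of $\mathcal{G}$ with respect to the canonical decomposition, so using it here would be circular. The non-uniqueness of the representation~(\ref{CHNS}) and of the greedy pairing order (cf.\ Appendix~\ref{irreducible}) compounds the issue: without first isolating a maximal PR-box part as the paper does in Eq.~(\ref{step1}), your $\eta$ is not determined by $P$ alone, and different reduction orders will leave remainders $Q$ with different values of $\mathcal{G}(Q)$. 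Your plan is therefore a reasonable outline of Result~\ref{thm1}, but as a proof of Observation~\ref{nllg0} it is both more than is needed and missing the one ingredient the paper supplies separately.
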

\begin{proof}
Since the set of NS boxes is convex and the Bell polytope is contained inside the full NS polytope,
any NS box lies on a line joining a nonlocal box and a local box.
Suppose the local box in the canonical decomposition given in Eq. (\ref{cNlL}) has $\mathcal{G}>0$, 
then it cannot, in general, represent the $\mathcal{G}=0$ boxes. Thus, the division of the Bell polytope into a 
$\mathcal{G}>0$ region and $\mathcal{G}=0$ region allows us to write any NS box as a convex mixture of a nonlocal box and a local box with $\mathcal{G}=0$.
\end{proof}

We obtain the following canonical decomposition of the NS boxes.
\begin{theorem}\label{thm1}
Any NS box can be decomposed into PR-box and a local box that does not have an irreducible PR-box component,
\be
P=\mu P^{\alpha\beta\gamma}_{PR}+\left(1-\mu\right)P_{L}^{\mathcal{G}=0}, \label{Gde}
\ee
where $\mu$ is the irreducible PR-box component and $P_{L}^{\mathcal{G}=0}$ is the local box which has $\mathcal{G}=0$.
\end{theorem}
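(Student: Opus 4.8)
The plan is to exhibit the decomposition explicitly, reading off both the weight $\mu$ and the PR-box $P^{\alpha\beta\gamma}_{PR}$ from the four Bell functions of Eq.~\eqref{MBF} and then checking the residual by a direct computation. For the given NS box $P$, record the values $\mathcal B_{00},\mathcal B_{01},\mathcal B_{10},\mathcal B_{11}$ and sort them as $a\ge b\ge c\ge d\ge0$. Unpacking the definition of $\mathcal G$ (Definition~\ref{BDdef}) over the three ways of pairing the four labels gives $\mathcal G(P)=\min\!\big(|(a-b)-(c-d)|,\,(a-b)+(c-d)\big)=|(a-b)-(c-d)|$, so I set $\mu:=\mathcal G(P)/4\in[0,1]$. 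If $\mu=0$ there is nothing to prove; if $\mu=1$ then $a=4$ and $P$ is already a PR-box; so assume $0<\mu<1$.

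The next step is to pick the extremal box to remove. From Eq.~\eqref{NLV} and LRO-invariance --- each of the eight PR-boxes saturates exactly one of the four functions $\mathcal B_{\alpha\beta}$ at $4$ and gives $0$ for the other three --- there is a canonical choice: if $a-b\ge c-d$, take $P^{\alpha\beta\gamma}_{PR}$ to be the PR-box whose contribution, with the matching sign, lies on the function of $P$ attaining the maximum $a$; if $a-b<c-d$, instead take the PR-box contributing (with the sign that enlarges the magnitude) to the function attaining the minimum $d$. Put $R:=(P-\mu P^{\alpha\beta\gamma}_{PR})/(1-\mu)$. Nonsignaling and normalization of $R$ are immediate; nonnegativity of $R$ --- i.e.\ $P(a_m,b_n|A_i,B_j)\ge\mu/2$ on the support of $P^{\alpha\beta\gamma}_{PR}$ --- I would deduce from the convex-hull form Eq.~\eqref{CHNS} together with Observations~\ref{umPR} and~\ref{Girre}, which identify $\mathcal G(P)/4$ with exactly the weight of the single PR-box that survives the canonical reduction (pair off anti-PR-boxes into white noise, then cancel equal PR-box weights); the monogamy relation~\eqref{BFm} bounds how much "wrong-direction" probability mass $P$ can carry and so certifies that $\mu$ is not too large.

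The decisive point is $\mathcal G(R)=0$. Since the correlators $\langle A_iB_j\rangle$ are linear in the box, every Bell function of $R$ equals $1/(1-\mu)$ times the corresponding one of $P$, except the one along the extracted direction, which equals $(\mathcal B_{\alpha\beta}(P)-4\mu)/(1-\mu)$. Substituting $4\mu=|(a-b)-(c-d)|$: when $a-b\ge c-d$ the four Bell functions of $R$ are, up to the common factor $(1-\mu)^{-1}$, equal to $(b+c-d,\;b,\;c,\;d)$, and the pairing that groups the largest with the second and the third with the smallest yields within-pair differences $c-d$ and $c-d$; when $a-b<c-d$ they are $(a,\;b,\;c,\;b+c-a)$, and the same pairing yields within-pair differences $a-b$ and $a-b$. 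In both regimes one of $\mathcal G_1(R),\mathcal G_2(R),\mathcal G_3(R)$ vanishes, hence $\mathcal G(R)=0$, and $P=\mu P^{\alpha\beta\gamma}_{PR}+(1-\mu)R$ is the claimed decomposition. That $R$ can be taken Bell-local follows from the structure of the local polytope: applying Observation~\ref{nllg0} first splits off a local $\mathcal G=0$ summand and leaves a nonlocal box violating a unique Bell--CHSH inequality, so that the extraction direction is unambiguous and the argument above can be run on that nonlocal box.

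The main obstacle is precisely this verification $\mathcal G(R)=0$ with the right $P^{\alpha\beta\gamma}_{PR}$. It cannot be argued by convexity: by the nonconvexity of the $\mathcal G=0$ region established earlier, the box left after deleting a PR-box need not inherit $\mathcal G=0$ from its constituents, so the vanishing has to come from the explicit Bell-function bookkeeping above. And, as the two cases show, removing the "obviously dominant" PR-box is the wrong move exactly when several Bell functions tie near the top, where one must instead remove the PR-box aligned with the smallest function; packaging both regimes (and the tie cases inside them) into one clean argument, together with the entrywise nonnegativity estimate for $R$, is where the real work lies.
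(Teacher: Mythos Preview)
Your direct extract-then-verify strategy is appealing, and the Bell-function bookkeeping that yields $\mathcal G(R)=0$ is correct in both regimes. The genuine gap is the step you yourself flag: nonnegativity of $R$. In Case~2 ($a-b<c-d$) your prescription subtracts the \emph{anti}-PR-box on the minimum-valued Bell function, and this can push entries of $R$ below zero. Take
\[
P \;=\; 0.35\,P^{000}_{PR}+0.30\,P^{010}_{PR}+0.25\,P^{100}_{PR}+0.05\,P^{110}_{PR}+0.05\,P_N .
\]
Here $(a,b,c,d)=(1.4,\,1.2,\,1.0,\,0.2)$, so $a-b=0.2<c-d=0.8$ and $\mu=0.15$; since $\mathcal B_{110}(P)=0.2>0$, your rule removes $P^{111}_{PR}$. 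But $P(0,1|0,0)=0.05\cdot\tfrac14=0.0125$ while $\mu\,P^{111}_{PR}(0,1|0,0)=0.075$, so $R(0,1|0,0)<0$ and $R$ is not a box at all. Observations~\ref{umPR}--\ref{Girre} do not rescue this: they identify $\mu$ with the weight of the PR-box surviving the pairwise reduction of the \emph{actual} vertex decomposition, and in this example that survivor is $P^{100}_{PR}$ (aligned with $c$), not $P^{111}_{PR}$. Subtracting $0.15\,P^{100}_{PR}$ does give a valid local $\mathcal G=0$ residual, but then the extraction direction is no longer determined by the sorted tuple $(a,b,c,d)$ alone.

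This is precisely what the paper's argument sidesteps by working from the vertex decomposition in Eq.~(\ref{CHNS}): it isolates the PR-box part $\sum_k g_kP^k_{PR}$, reduces it via Observation~\ref{umPR} to a single irreducible PR-box plus uniform two-PR mixtures and white noise as in Eq.~(\ref{step2}), and exhibits the residual explicitly as a convex combination of local boxes. Because the extracted PR-box is, by construction, one already present with weight at least $\mu$, nonnegativity and locality of the remainder come for free. Your route can be partly repaired in Case~2 by extracting along the $c$-direction with the matching sign (one checks $\mathcal G(R)=0$ there as well), but you still owe a direct nonnegativity argument that does not simply reproduce the paper's convex-hull reduction; and the appeal to Observation~\ref{nllg0} for locality is not enough either, since recombining two $\mathcal G=0$ local pieces need not yield a $\mathcal G=0$ box by the very nonconvexity you note.
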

\begin{proof}
We write any NS box given by the decomposition in Eq. (\ref{CHNS}) as a convex combination of the $8$ PR-boxes and a restricted local box 
that cannot be written as a convex sum of the PR-boxes and the deterministic boxes:
\be
P=\sum^7_{k=0} g_k P^k_{PR} +\left(1-\sum^7_{k=0} g_k\right)P_L; \quad k=\alpha\beta\gamma, \label{step1}
\ee
where $P_L\ne \sum_k r_k P^k_{PR}+\sum_l s_l P^l_D$, i.e., $P_L$ cannot have nonzero $r_k$ overall possible decompositions. 
We are now interested in reducing the combination of the $8$ PR-boxes in Eq. (\ref{step1}) into an irreducible PR-box and a local box
by using the procedure given in observation \ref{umPR}.
It follows from the observation \ref{Girre} that we should first reduce the mixture of the $8$ PR-boxes in Eq. (\ref{step1}) 
into the mixture of the $4$ PR-boxes which does not contain any anti-PR-box, and white noise. Then,
we further reduce it to the mixture of an irreducible PR-box and the local boxes which are the uniform mixture of the two PR-boxes:
\be
\sum^7_{k=0}g_kP^{k}_{PR}=\mu P^{\alpha\beta\gamma}_{PR}+\sum^3_{l=1} p_lP^l_L+p_NP_N. \label{step2}
\ee
Here $\mu$ is obtained by minimizing the PR-box component over all possible decompositions  
i.e., $\mu>0$ iff $\sum^7_{k=0}g_kP^{k}_{PR}\ne\sum^3_{l=1} q_lP^l_L+p_NP_N$ (see Appendix. \ref{irreducible} for illustration). 
Here $P^l_L$ are the maximally-local boxes since the
uniform mixture of any two PR-boxes from the set of the four PR-boxes that does not contain any anti-PR-box
gives the local bound of a Bell-CHSH inequality.
Now substituting Eq. (\ref{step2}) in Eq. (\ref{step1}), 
we get the following decomposition of any NS box,
\be
P=\mathcal{\mu} P^{\alpha\beta\gamma}_{PR}+(1-\mu)P_L.  \label{proofgnz}
\ee
Here $P_L=\frac{1}{1-\mathcal{\mu}}\left\{\sum^3_{l=1} p_lP^l_L+p_NP_N+\left(1-\sum_k g_k\right)P_L\right\}$.
This local cannot have an irreducible PR-box component since $\mu$ is the maximal irreducible PR-box component. 
Further, it follows from the observation \ref{nllg0} that the local box in Eq. (\ref{proofgnz}) must have $\mathcal{G}=0$. 
\end{proof}

We now show that a box has nonzero Bell discord iff it admits a decomposition that has an irreducible PR-box component. 
For any box given by the decomposition in Eq. (\ref{Gde}),  
$\mathcal{G}$ is linear (see Appendix \ref{lbdmd} for illustration) i.e., $\mathcal{G}(P)=\mu\mathcal{G}\left(P^{\alpha\beta\gamma}_{PR}\right)+\left(1-\mu
\right)\mathcal{G}\left(P^{\mathcal{G}=0}_L\right)$ which implies $\mathcal{G}(P)=4\mu>0$ iff $\mu>0$. 
Thus, if a box has nonzero Bell discord, it lies on a line joining a PR-box and a local box that does not have an irreducible PR-box component.  
The violation of a Bell-CHSH inequality is only 
a sufficient condition for nonzero Bell discord which can be illustrated by evaluating the Bell-CHSH operator $\mathcal{B}_{000}$
for the box given by the decomposition in Eq. (\ref{Gde}) with $\alpha\beta\gamma=000$ and $l=\mathcal{B}_{000}\left(P^{\mathcal{G}=0}_L\right)>0$. 
This box violates the Bell-CHSH inequality, i.e., $\mathcal{B}_{000}=4\mu+l(1-\mu)>2$ iff $\mu>\frac{2-l}{4-l}$. 
Suppose $l=0$, the box violates the Bell-CHSH inequality if $\mu>\frac{1}{2}$, whereas it has nonzero Bell discord if $\mu>0$.

We say that the decomposition of the NS boxes given in Eq. (\ref{Gde}) is canonical in that it represents the classification of any NS box
according to whether it has Bell discord or not, which is more general than the classification of NS boxes into nonlocal and local boxes. 
In Sec. \ref{prl}, we observed that local boxes which do not have Bell discord
can have Mermin discord due to the irreducible Mermin box component which in turn implies that 
the canonical description of NS boxes given in Eq. (\ref{Gde}) is not the most general as the $\mathcal{G}=0$ box in this decomposition can have Mermin discord.

\subsection{Mermin boxes}Before we define a measure of Mermin discord we introduce all the Mermin boxes which lie on the Bell polytope. 
We consider the following maximally-local box,
\begin{equation}
P^{nm}_M = \left( \begin{array}{cccc}
1 & 0 & 0 & 0 \\
\half  & \half  & 0 & 0 \\
\half & 0 & \half  & 0 \\
0 & \half & \half &  0
\end{array} \right).
\label{eq:nmerminbox}
\end{equation}
This box also exhibits the logical contradiction with realistic value assignment shown by the Mermin box in Eq. (\ref{eq:merminbox}). 
Notice that these two Mermin boxes differ by their 
marginals; the one in Eq. (\ref{eq:merminbox}) has maximally mixed marginals, whereas the Mermin box in Eq. (\ref{eq:nmerminbox}) has nonmaximally
mixed marginals.
Thus, the Bell polytope admits two types of Mermin boxes which can be distinguished by their marginals. The following $32$ Mermin boxes:
\ba
P_M^{\alpha\beta\gamma\epsilon}=\frac{1}{2}(\delta^i_{m\oplus
  i\oplus\alpha}\delta^j_{n\oplus   j\oplus\beta}   +\delta^i_{m\oplus
  \gamma}\delta^j_{n\oplus\epsilon}),\nonumber\\
   P_{M'}^{\alpha\beta\gamma\epsilon}=\frac{1}{2}(\delta^i_{m\oplus
  i\oplus\alpha}\delta^j_{n\oplus\beta}+\delta^i_{m\oplus
  \gamma}\delta^j_{n\oplus j \oplus \epsilon}), \label{nMmmm} 
\ea 
which are equal the mixture of two deterministic boxes, can be obtained from the Mermin box in Eq. (\ref{eq:nmerminbox}) by LRO. The following $8$ Mermin boxes:
\begin{align}
P_M^{\alpha\beta\gamma}(a_m,b_n|A_i,B_j)&=\left\{
\begin{array}{lr}
\frac{1}{4}, & i\oplus j =0 \\
\frac{1}{2}, & m\oplus n=i\cdot j \oplus  \alpha  i \oplus  \beta j \oplus \gamma\\ 
0 , & \text{otherwise},\nonumber\\
\end{array}
\right.   
\end{align}
here $\alpha\beta\gamma=00\gamma,10\gamma$, and,
for $\alpha\beta\gamma=01\gamma,11\gamma$,
\begin{align}
P_M^{\alpha\beta\gamma}(a_m,b_n|A_i,B_j)
&=\left\{
\begin{array}{lr}
\frac{1}{4}, & i\oplus j =1 \\
\frac{1}{2}, & m\oplus n=i\cdot j \oplus  \alpha  i\oplus \beta j \oplus  \gamma  \\ 
0 , & \text{otherwise},\\
\end{array} \label{Mmmm}
\right.   
\end{align}
which are the equal mixture of four deterministic boxes, can be obtained from the Mermin box in Eq. (\ref{eq:merminbox}) by LRO. 
As all the Mermin boxes are maximally-local, they lie on the facet of the Bell polytope (see fig. \ref{NS3dfig}).

Similar to PR-boxes which are locally equivalent and maximally nonlocal, 
the Mermin boxes with maximally mixed marginals are locally equivalent and maximally nonclassical. 
The following analogy with PR-box would help us to understand how the Mermin box is nonclassical despite being noncontextual and nonextremal with respect to the NS polytope:
A PR-box is not extremal with respect to the signaling polytope since it can be decomposed into the uniform mixture of two nonlocal deterministic boxes \cite{Bub}; for instance,
the canonical PR-box, $P^{000}_{PR}$, can be written as the uniform mixture of two signaling deterministic boxes which violate the same Bell-CHSH inequality to the algebraic
maximum,
\be
P^{000}_{PR}=\frac{1}{2}\left(\delta_m^0 \delta_n^{i \cdot j}+\delta_m^1 \delta_n^{i \cdot j}\right).
\ee
Thus, signaling is disappeared by the uniform mixture, however, nonlocality of the two signaling boxes does not disappear
as the uniform mixture again maximally violates the Bell-CHSH inequality.
Similarly, the Mermin boxes with maximally mixed marginals admit a decomposition into the uniform mixture of two PR-boxes. 
A uniform mixture of two PR-boxes can also give rise to white noise, but a Mermin box is a special kind of maximally-local box that has nonclassicality 
while nonlocality is disappeared by the uniform mixture. For instance, 
the Mermin box in Eq. (\ref{eq:merminbox}) can be decomposed as follows,
\be
P_M=\frac{1}{2}\left(P^{000}_{PR}+P_{PR}^{110}\right).
\ee
Notice that the correlated box in Eq. (\ref{eq:ccbox}) cannot be decomposed into the uniform mixture of two PR-boxes.

Since all the Mermin boxes have $\mathcal{G}=0$, any $\mathcal{G}=0$ box in the Bell polytope can be written as a convex mixture of the Mermin boxes and 
the deterministic boxes. We will use this observation to obtain the $3$-decomposition fact from the canonical decomposition given in Eq. (\ref{Gde}).   
\subsection{Mermin discord} 
We consider the Mermin inequalities:
\ba
\mathcal{M}_{\alpha\beta\gamma}&:=
&(\alpha\oplus\beta\oplus1)\{(-1)^{\beta}\braket{A_0B_1}\!+\!(-1)^{\alpha}\braket{A_1B_0}\}\nonumber\\ 
&&+(\alpha\oplus\beta)\{(-1)^{\gamma}\braket{A_0B_0}+(-1)^{\alpha\oplus\beta\oplus\gamma\oplus 1}\braket{A_1B_1}\}\nonumber\\
  &&\le2 \quad \text{for} \quad \alpha\beta\gamma=00\gamma,01\gamma;\nonumber \\
\mathcal{M}_{\alpha\beta\gamma}&:=&(\alpha\oplus\beta)\{(-1)^{\beta}\braket{A_0B_1}\!+\!(-1)^{\alpha}\braket{A_1B_0}\}\nonumber\\ 
&&+(\alpha\oplus\beta\oplus1)\{(-1)^{\gamma}\braket{A_0B_0}+(-1)^{\alpha\oplus\beta\oplus\gamma\oplus1}\braket{A_1B_1}\}\nonumber\\
  &&\le2 \quad \text{for} \quad \alpha\beta\gamma=10\gamma,11\gamma, \label{bimi}
\ea 
The multipartite generalization of $\mathcal{M}_{\alpha\beta\gamma}$ generate the Mermin inequalities \cite{mermin,WernerWolfmulti}, hence the name.
Just as the complete set of Bell-CHSH inequalities, the set of these inequalities 
is invariant under LRO and thus it forms a complete set \cite{WernerWolf}. 
The complete set of bipartite Mermin inequalities 
do not distinguish between local and nonlocal correlations since the algebraic maximum 
of any Mermin function, $\mathcal{M}_{\alpha\beta\gamma}$, is $2$ which is the same as the bound given in Eq. (\ref{bimi}). 
However, magnitude of the modulus of the Mermin functions, 
$\mathcal{M}_{\alpha\beta}:=|\mathcal{M}_{\alpha\beta\gamma}|$,
serve to construct Mermin discord which distinguishes Mermin boxes from the other boxes. 

We observe that for any Mermin box, only one of the Mermin functions, $\mathcal{M}_{\alpha\beta}$, attains $2$ and the rest of them are zero, 
whereas for the deterministic boxes and the PR-boxes, two of the Mermin functions attains $2$ and the other two are zero.
We exploit this property of the extremal boxes with respect to the complete set of Mermin functions $\{\mathcal{M}_{\alpha\beta}\}$ 
to define Mermin discord.
\begin{definition}\label{defMD}
Mermin discord, $\mathcal{Q}$, is defined as,
\begin{equation}
\mathcal{Q} := \min_j \mathcal{Q}_j,
\end{equation}    
where, $\mathcal{Q}_1=\Big||\mathcal{M}_{00}-\mathcal{M}_{01}|-|\mathcal{M}_{10}-\mathcal{M}_{11}|\Big|$, and $\mathcal{Q}_2$ and $\mathcal{Q}_3$ 
are obtained by permuting $\mathcal{M}_{\alpha\beta}$ in $\mathcal{Q}_1$. Here $0\le\mathcal{Q}\le2$. 
\end{definition}
Mermin discord is constructed such that all the PR-boxes and the deterministic boxes have $\mathcal{Q}=0$, and, 
the algebraic maximum of $\mathcal{Q}$ is achieved by the Mermin boxes 
i.e., $\mathcal{Q}=2$ for any Mermin box. Mermin discord is clearly invariant
under LRO and permutation of the parties as the set $\{\mathcal{Q}_j\}$ is invariant under these two transformations. 

We obtain the following observations from the Mermin discord defined in \ref{defMD}.
\begin{observation} \label{qzp}
The set of $\mathcal{Q}=0$ boxes forms a nonconvex subset of the set of all NS boxes. 
\end{observation}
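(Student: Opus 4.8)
The plan is to mirror the argument that established nonconvexity of the set of $\mathcal{G}=0$ boxes. Two ingredients suffice. First, all deterministic boxes have $\mathcal{Q}=0$ by the very construction of Definition~\ref{defMD}, and so does white noise $P_N$: since every correlator $\braket{A_iB_j}$ of $P_N$ vanishes, each Mermin function $\mathcal{M}_{\alpha\beta\gamma}$ is zero, hence so is every $\mathcal{Q}_j$ and therefore $\mathcal{Q}$. Second, I will exhibit a box with $\mathcal{Q}>0$ that is nonetheless a convex combination of such $\mathcal{Q}=0$ boxes, which immediately gives nonconvexity.

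For the second ingredient the cleanest witness is the Mermin box $P_M$ of Eq.~(\ref{eq:merminbox}) itself: as noted after Eq.~(\ref{Mmmm}), it is an equal mixture of four deterministic boxes, each of which has $\mathcal{Q}=0$, and yet $\mathcal{Q}(P_M)=2$. Alternatively, and more in the spirit of the isotropic-PR-box example used for Bell discord, one can take the isotropic Mermin box $P=pP_M+(1-p)P_N$ of Eq.~(\ref{Mmot}). Since $P_M$ is maximally local (it lies on a facet of the Bell polytope) and $P_N$ lies in the interior of $\mathcal{L}$, the whole segment lies inside the Bell polytope, so for every $p\in[0,1]$ the isotropic Mermin box is a convex combination of deterministic boxes; yet $\mathcal{Q}(P)=2p>0$ whenever $p>0$. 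The identity $\mathcal{Q}(P)=2p$ follows because each $\mathcal{M}_{\alpha\beta\gamma}$ is linear in the correlators, $P_N$ contributes vanishing correlators, and for $P_M$ exactly one of the $\mathcal{M}_{\alpha\beta}$ attains $2$ while the other three vanish; hence $\mathcal{M}_{\alpha\beta}(P)=p\,\mathcal{M}_{\alpha\beta}(P_M)$, and using $\big||px|-|py|\big|=p\big||x|-|y|\big|$ for $p\ge0$ one gets $\mathcal{Q}_j(P)=p\,\mathcal{Q}_j(P_M)$ for each $j$, so $\mathcal{Q}(P)=p\,\mathcal{Q}(P_M)=2p$.

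Either witness shows that a convex combination of $\mathcal{Q}=0$ boxes can have $\mathcal{Q}>0$, so the set of $\mathcal{Q}=0$ boxes is nonconvex; it is trivially contained in the NS polytope, and is a proper subset since $P_M$ is an NS box with $\mathcal{Q}\neq 0$. I expect no real obstacle here: the only place requiring care is the linearity computation $\mathcal{Q}(pP_M+(1-p)P_N)=2p$, which rests on the stated extremal property that exactly one Mermin function is nonzero for a Mermin box; if one prefers to avoid even that, the route through ``$P_M$ is an equal mixture of four deterministic boxes'' needs nothing beyond facts already recorded in the excerpt.
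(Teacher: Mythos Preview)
Your proposal is correct and follows essentially the same approach as the paper: exhibit $\mathcal{Q}=0$ boxes whose convex combination has $\mathcal{Q}>0$, and note that the $\mathcal{Q}=0$ set sits properly inside $\mathcal{N}$. The paper's own proof is terser and additionally invokes that the PR-boxes (not only the deterministic boxes) have $\mathcal{Q}=0$, but your explicit witnesses---the Mermin box as an equal mixture of four deterministic boxes, or the isotropic Mermin box---make the same argument more concrete.
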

\begin{proof}
The set of $\mathcal{Q}=0$ boxes is nonconvex since certain convex mixture of the $\mathcal{Q}=0$ boxes can have $\mathcal{Q}>0$. 
Since the PR-boxes and the deterministic boxes have $\mathcal{Q}=0$, 
the set of $\mathcal{Q}=0$ boxes forms 
a nonconvex region in the full NS polytope.
\end{proof}

\begin{observation}\label{qdg}
$\mathcal{Q}$ divides the $\mathcal{G}=0$ region into a $\mathcal{Q}>0$ region and $\mathcal{G}=\mathcal{Q}=0$ nonconvex region. 
\end{observation}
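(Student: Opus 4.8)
The plan is to mirror, now restricted to the slice $\mathcal{G}=0$, the arguments already used for Bell discord (Observations \ref{qzp} and \ref{nllg0} and Theorem \ref{thm1}). Since every NS box carries a well-defined value $\mathcal{G}\ge 0$ and $\mathcal{Q}\ge 0$, the splitting of the $\mathcal{G}=0$ region into $\{\mathcal{Q}>0\}$ and $\{\mathcal{G}=\mathcal{Q}=0\}$ is tautological; what actually has to be established is (a) that both pieces are nonempty, so that $\mathcal{Q}$ genuinely divides the region, and (b) that the piece $\{\mathcal{G}=\mathcal{Q}=0\}$ is nonconvex. I would prove both by exhibiting explicit boxes.

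For (a): the $16$ deterministic boxes lie in $\{\mathcal{G}=\mathcal{Q}=0\}$ (by the defining properties of the $\mathcal{G}_i$ and of $\mathcal{Q}$), and so does white noise $P_N$, all of whose correlators vanish; hence that piece is nonempty. For the other piece, take any Mermin box, e.g. $P_M$ of Eq. (\ref{eq:merminbox}). Because $P_M=\frac{1}{2}(P^{000}_{PR}+P^{110}_{PR})$ is an equal mixture of a PR-box and its anti-PR-box, it contains no irreducible PR-box component, so $\mathcal{G}(P_M)=0$ (equivalently, all Mermin boxes lie on the Bell polytope with $\mathcal{G}=0$), whereas $\mathcal{Q}(P_M)=2$ by the very construction of $\mathcal{Q}$. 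The isotropic Mermin box $pP_M+(1-p)P_N$ likewise has $\mathcal{G}=0$ and, by linearity of $\mathcal{Q}$ along this line, $\mathcal{Q}=2p>0$ for $p>0$. Thus $\{\mathcal{G}=0,\ \mathcal{Q}>0\}$ is nonempty as well.

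For (b), the nonconvexity: I would exhibit a segment whose endpoints lie in $\{\mathcal{G}=\mathcal{Q}=0\}$ but whose midpoint does not. The natural choice is the nonmaximally-mixed-marginals Mermin box $P^{nm}_M$ of Eq. (\ref{eq:nmerminbox}): by Eq. (\ref{nMmmm}) it is the uniform mixture of two product (deterministic) boxes, namely the one in which Alice's output copies her input while Bob's output is constant, and the one with the roles of Alice and Bob reversed. Each endpoint is a deterministic box, hence has $\mathcal{G}=\mathcal{Q}=0$, while the midpoint $P^{nm}_M$ is a Mermin box, so it has $\mathcal{Q}=2>0$ and still $\mathcal{G}=0$. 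Therefore the segment starts and ends in $\{\mathcal{G}=\mathcal{Q}=0\}$ but leaves it, which proves nonconvexity. (Alternatively one can reuse Observation \ref{qzp} directly, once one checks that the two deterministic endpoints used there keep $\mathcal{G}=0$, which is immediate since any Bell-local mixture of two deterministic boxes has $\mathcal{G}=0$.)

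The step I expect to demand the most care is confirming that the $\mathcal{G}=0$ constraint is genuinely preserved throughout — that is, that the Mermin-type witnesses really carry zero irreducible PR-box content (this rests on Observation \ref{umPR} and Theorem \ref{thm1}) — and that $\mathcal{Q}$ is \emph{exactly} linear, not merely positive, along the isotropic-Mermin line (the linearity argument of Appendix \ref{lbdmd}). These are precisely the points at which the claimed division could fail, namely if $\mathcal{Q}$ did not cleanly separate Mermin boxes from deterministic boxes inside the $\mathcal{G}=0$ slice.
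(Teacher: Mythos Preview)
Your argument is correct and follows the same line as the paper's (very terse) proof: deterministic boxes have $\mathcal{G}=\mathcal{Q}=0$, yet a convex combination of them can produce a Mermin box with $\mathcal{Q}=2$, so the $\mathcal{G}=\mathcal{Q}=0$ piece is nonconvex inside the $\mathcal{G}=0$ region; you simply make the witness explicit via $P^{nm}_M$. One small slip: $P^{110}_{PR}$ is \emph{not} the anti-PR-box of $P^{000}_{PR}$ (that would be $P^{001}_{PR}$, whose equal mixture gives white noise), but your conclusion $\mathcal{G}(P_M)=0$ still holds because any \emph{equal} mixture of two PR-boxes has zero irreducible PR-box content (Observation~\ref{umPR} with $p=q$), and indeed the paper states all Mermin boxes have $\mathcal{G}=0$.
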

\begin{proof}
Since all the deterministic boxes have $\mathcal{G}=\mathcal{Q}=0$
and the certain convex combination of the deterministic boxes can give rise to a $\mathcal{Q}>0$ box,
the set of $\mathcal{G}=\mathcal{Q}=0$ boxes forms a nonconvex subregion of the $\mathcal{G}=0$ region.
\end{proof}

\begin{observation}\label{mlq2}
A maximally-local box that has $\mathcal{Q}=2$ is, in general, a convex combination of a maximally mixed marginals Mermin box and the four 
nonmaximally mixed marginals Mermin boxes
which are equivalent with respect to $\braket{A_iB_j}$,
\be
P^{\alpha\beta\gamma}_{\mathcal{Q}=2}=\sum^4_{i=1}p_{M_i}P^{nm}_{M_i}+p_MP^{\alpha\beta\gamma}_M,
\ee
where $P^{nm}_{M_i}$ are the four nonmaximally mixed marginals Mermin boxes which all have the same values for $\braket{A_iB_j}$ and 
$P^{\alpha\beta\gamma}_M=\frac{1}{4}\sum^4_{i=1}P^{nm}_{M_i}$ is one of the eight maximally mixed marginals Mermin boxes.  
\end{observation}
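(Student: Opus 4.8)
The plan is to exploit the fact that Mermin discord depends on a box only through its four correlators. Evaluating the Mermin functions in Eq.~(\ref{bimi}) gives $\mathcal{M}_{00}=|\braket{A_0B_1}+\braket{A_1B_0}|$, $\mathcal{M}_{10}=|\braket{A_0B_1}-\braket{A_1B_0}|$, $\mathcal{M}_{01}=|\braket{A_0B_0}+\braket{A_1B_1}|$ and $\mathcal{M}_{11}=|\braket{A_0B_0}-\braket{A_1B_1}|$, so the quantity $\mathcal{Q}=\min_j\mathcal{Q}_j$ of Definition~\ref{defMD} is a function of the correlator vector $(\braket{A_0B_0},\braket{A_0B_1},\braket{A_1B_0},\braket{A_1B_1})$ alone. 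Hence the characterization splits into two independent tasks: (i) determine all correlator vectors with $\mathcal{Q}=2$, and (ii) describe the set of NS boxes realizing a given such vector. Task (i) is purely arithmetic, since every vector in $[-1,1]^4$ is realized, e.g. by the box with maximally mixed marginals.

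For (i) I would first record the ``block'' inequalities $\mathcal{M}_{00}+\mathcal{M}_{10}=2\max(|\braket{A_0B_1}|,|\braket{A_1B_0}|)\le2$ and $\mathcal{M}_{01}+\mathcal{M}_{11}\le2$; in particular, within each of the pairs $\{\mathcal{M}_{00},\mathcal{M}_{10}\}$ and $\{\mathcal{M}_{01},\mathcal{M}_{11}\}$ at most one member can equal $2$, and only if its partner is $0$. Since every $\mathcal{Q}_j\le2$ and $\mathcal{Q}=\min_j\mathcal{Q}_j$, the condition $\mathcal{Q}=2$ means $\mathcal{Q}_1=\mathcal{Q}_2=\mathcal{Q}_3=2$; as $|x-y|=2$ for $x,y\in[0,2]$ forces $\{x,y\}=\{0,2\}$, each $\mathcal{Q}_j=2$ splits the four $\mathcal{M}_{\alpha\beta}$ into a ``$\{0,2\}$-pair'' and an ``equal pair''. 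A short case analysis over which $\mathcal{M}_{\alpha\beta}$ is maximal then shows that the three conditions together force exactly one $\mathcal{M}_{\alpha\beta}$ to equal $2$ and the other three to vanish --- equivalently, one of the two conjugate pairs $(\braket{A_0B_0},\braket{A_1B_1})$, $(\braket{A_0B_1},\braket{A_1B_0})$ equals $(\pm1,\mp1)$ or $(\pm1,\pm1)$ while the other is $(0,0)$. These are precisely the correlator vectors of the eight maximally mixed marginals Mermin boxes of Eq.~(\ref{Mmmm}) (and of the $32$ nonmaximally mixed ones of Eq.~(\ref{nMmmm}), four per vector). Such a vector has two vanishing correlators, so $|\mathcal{B}_{\alpha\beta\gamma}|\le2$ for all $\alpha\beta\gamma$ in Eq.~(\ref{BCHSH}) while one of them is saturated; by Fine's theorem \cite{Fine} the box is therefore local and indeed maximally-local, so the hypothesis ``maximally-local'' is in fact automatic once $\mathcal{Q}=2$.

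For (ii), by a relabelling (LRO) I would fix the correlator vector to that of $P_M$ in Eq.~(\ref{eq:merminbox}), namely $\braket{A_0B_0}=1$, $\braket{A_1B_1}=-1$, $\braket{A_0B_1}=\braket{A_1B_0}=0$. Maximality of $\braket{A_0B_0}$ puts all weight of that row onto the events with $a_mb_n=+1$ and pins $P(a_0|A_0)=P(b_0|B_0)=:u$; minimality of $\braket{A_1B_1}$ likewise pins $P(a_0|A_1)=:v$ and $P(b_0|B_1)=1-v$; and the two vanishing correlators then fix every remaining entry of the $A_0B_1$ and $A_1B_0$ rows as affine functions of $u,v$. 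Imposing positivity of all sixteen entries yields the region $|u-v|\le\frac12$, $\frac12\le u+v\le\frac32$, a square in the $(u,v)$-plane whose four extreme points $(u,v)\in\{(\frac12,0),(1,\frac12),(\frac12,1),(0,\frac12)\}$ are the four nonmaximally mixed marginals Mermin boxes among those of Eq.~(\ref{nMmmm}) carrying this correlator vector, and whose centroid $(u,v)=(\frac12,\frac12)$ is $P_M$, the maximally mixed marginals Mermin box $P^{\alpha\beta\gamma}_M=\frac14\sum_{i=1}^4P^{nm}_{M_i}$. Hence every maximally-local box with $\mathcal{Q}=2$ lies in the convex hull of these five boxes, which is the asserted decomposition; the explicit $P^{\alpha\beta\gamma}_M$-term merely rewrites points of the square through its centre, and the phrase ``in general'' covers the degenerate cases where a single weight is $1$. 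I expect the main obstacle to be the case analysis in (i): one must carefully exclude $\mathcal{M}$-patterns that satisfy all three $\mathcal{Q}_j=2$ formally but violate the block inequalities --- for instance $(\mathcal{M}_{00},\mathcal{M}_{01},\mathcal{M}_{10},\mathcal{M}_{11})=(2,0,2,2)$ --- and hence are not realized by any correlator vector.
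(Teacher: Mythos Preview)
Your proposal is correct and considerably more complete than the paper's own argument. The paper's proof only establishes the \emph{easy} direction: it observes that the five Mermin boxes sharing a common correlator vector all have $\mathcal{Q}=2$, and hence so does any convex mixture of them; it then simply remarks that the uniform mixture of the four nonmaximally-mixed-marginals boxes equals the maximally-mixed-marginals one. It does not actually argue the converse --- that \emph{every} maximally-local $\mathcal{Q}=2$ box must lie in this convex hull --- which is the content of the observation as stated. Your approach supplies exactly that missing direction: by first reducing $\mathcal{Q}$ to a function of the correlator vector, you carry out the case analysis showing that $\mathcal{Q}=2$ forces precisely one of the eight Mermin correlator patterns, and then you parametrize all NS boxes realizing that pattern by the two free marginals $(u,v)$, obtaining the square whose four vertices are the nonmaximally-mixed Mermin boxes and whose centroid is $P^{\alpha\beta\gamma}_M$. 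This not only proves the observation rigorously but strengthens it, since you show the hypothesis ``maximally-local'' is automatic once $\mathcal{Q}=2$. The trade-off is that the paper's heuristic is shorter and keeps the focus on the structural picture (five equivalent Mermin boxes), while your argument does the honest bookkeeping; both are compatible, and your part~(ii) also makes explicit the identity $P^{\alpha\beta\gamma}_M=\tfrac14\sum_i P^{nm}_{M_i}$ that the paper merely asserts.
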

\begin{proof}
Notice that the two Mermin boxes in Eqs. (\ref{eq:merminbox}) and (\ref{eq:nmerminbox}) have the the same property with respect to $\braket{A_iB_j}$ i.e., 
they have $\braket{A_0B_0}=-\braket{A_1B_1}=1$
and $\braket{A_0B_1}=\braket{A_1B_0}=0$ which implies that any convex mixture of these two Mermin boxes again have $\mathcal{Q}=2$. There are 
four nonmaximally mixed marginals Mermin boxes which are equivalent with respect to $\braket{A_iB_j}$ corresponding to a given maximally mixed marginals Mermin box. 
Thus, any convex mixture of these five Mermin boxes is again a $\mathcal{Q}=2$ box. It can be checked that equal mixture of the four nonmaximally mixed marginals Mermin boxes 
which are equivalent with respect to $\braket{A_iB_j}$ gives the maximally mixed marginals Mermin box.
\end{proof}

\begin{observation}\label{imbc}
The unequal mixture of any two Mermin boxes which differ by $\braket{A_iB_j}$: $pP^1_M+qP^2_M$; $p>q$, 
can be written as a convex mixture of an irreducible Mermin box and a $\mathcal{Q}=0$ box.  
\end{observation}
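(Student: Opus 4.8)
The plan is to mirror the PR-box argument of Observation~\ref{umPR}. First I would write the unequal mixture as
\be
pP^1_M+qP^2_M=(p-q)\,P^1_M+2q\,\bar{P},\qquad \bar{P}:=\frac{1}{2}\left(P^1_M+P^2_M\right),
\ee
so that the whole claim reduces to two facts: (i) the uniform mixture $\bar{P}$ has $\mathcal{Q}=0$, and (ii) $p-q$ is exactly the irreducible Mermin-box weight, so that $P^1_M$ plays the role of the irreducible Mermin box while $P^2_M$ is not irreducible (it would cancel against $P^1_M$ in the symmetric splitting, just as the second PR-box does in Observation~\ref{umPR}).

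For step (i) I would use the observation recalled just before Definition~\ref{defMD}: every Mermin box has exactly one of its four Mermin functions $\mathcal{M}_{\alpha\beta}$ equal to $2$ and the other three equal to $0$. Since each correlator $\braket{A_iB_j}$, and hence each signed functional $\mathcal{M}_{\alpha\beta\gamma}$, is linear in the box, $\mathcal{M}_{\alpha\beta\gamma}(\bar{P})=\frac{1}{2}\mathcal{M}_{\alpha\beta\gamma}(P^1_M)+\frac{1}{2}\mathcal{M}_{\alpha\beta\gamma}(P^2_M)$, so each $\mathcal{M}_{\alpha\beta}(\bar{P})$ lies in $\{0,1\}$, with at most the two values inherited from $P^1_M$ and $P^2_M$ being nonzero. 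If $P^1_M$ and $P^2_M$ carry their nonzero Mermin function at the \emph{same} index $(\alpha,\beta)$, then, since a Mermin box with a given nonzero $\mathcal{M}_{\alpha\beta}$ has its correlator profile pinned down up to an overall sign by $|\braket{A_iB_j}|\le1$, and since the two boxes differ in $\braket{A_iB_j}$ (they are not equivalent in the sense of Observation~\ref{mlq2}), one must have $\mathcal{M}_{\alpha\beta\gamma}(P^1_M)=-\mathcal{M}_{\alpha\beta\gamma}(P^2_M)$; the contributions cancel, $\bar{P}$ has all four $\mathcal{M}_{\alpha\beta}=0$, and $\mathcal{Q}(\bar{P})=0$ trivially. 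If instead the two nonzero indices are distinct, then $\bar{P}$ has exactly two Mermin functions equal to $1$; since any two of the four index pairs $\{00,01,10,11\}$ sit together in exactly one of the three groupings defining $\mathcal{Q}_1,\mathcal{Q}_2,\mathcal{Q}_3$, that particular $\mathcal{Q}_j$ evaluates to $\big||1-1|-|0-0|\big|=0$, whence $\mathcal{Q}(\bar{P})=\min_j\mathcal{Q}_j=0$.

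For step (ii) I would evaluate $\mathcal{Q}$ directly on $P:=pP^1_M+qP^2_M$ by the same linearity: in the distinct-index case the four Mermin functions of $P$ are $\{2p,2q,0,0\}$, while in the same-index case they are $\{2(p-q),0,0,0\}$; a short computation shows that in either case all three $\mathcal{Q}_j$ equal $2(p-q)$, so $\mathcal{Q}(P)=2(p-q)$. As a Mermin box has $\mathcal{Q}=2$, this identifies $p-q$ as the irreducible Mermin-box content and confirms that the displayed decomposition has the advertised form. I expect the main obstacle to be the bookkeeping in step (i): one must check carefully that ``differ by $\braket{A_iB_j}$'' really does force the sign flip in the same-index subcase, and that the permutations entering $\mathcal{Q}_2$ and $\mathcal{Q}_3$ indeed realize all three pairings of $\{00,01,10,11\}$; the remaining manipulations are the routine linear-functional computations already used for $\mathcal{G}$ in Appendix~\ref{lbdmd}.
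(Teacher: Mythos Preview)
Your proposal is correct and follows the same route as the paper: you write $pP^1_M+qP^2_M=(p-q)P^1_M+2q\,\bar P$ with $\bar P=\tfrac{1}{2}(P^1_M+P^2_M)$ and then argue that $\bar P$ has $\mathcal{Q}=0$, which is exactly the paper's proof. The paper, however, simply \emph{asserts} that the uniform mixture of two Mermin boxes differing by $\braket{A_iB_j}$ has $\mathcal{Q}=0$, whereas you actually supply the case analysis (same vs.\ distinct nonzero Mermin-function index) and the direct evaluation $\mathcal{Q}(P)=2(p-q)$; so your write-up is strictly more detailed than the original, but not a different approach.
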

\begin{proof}
$pP^1_M+qP^2_M=(p-q)P^1_M+2qP_{\mathcal{Q}=0}$. Here $P_{\mathcal{Q}=0}=\frac{1}{2}(P^1_M+P^2_M)$
is a $\mathcal{Q}=0$ box since it is the uniform mixture of the two Mermin boxes which differ by $\braket{A_iB_j}$.  
\end{proof}

We now obtain the following $3$-decomposition from these observations and the result \ref{thm1}. 
\begin{theorem}
Any NS box can be written as a convex mixture of a PR-box, a maximally-local box with $\mathcal{Q}=2$ and a local box with $\mathcal{G}=\mathcal{Q}=0$,
\be
P=\mu P^{\alpha\beta\gamma}_{PR}+\nu P^{\alpha\beta\gamma}_{\mathcal{Q}=2}+(1-\mu-\nu)P^{\mathcal{G}=0}_{\mathcal{Q}=0}. \label{pDecomp}
\ee
\end{theorem}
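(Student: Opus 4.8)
The plan is to chain the canonical decomposition of Result~\ref{thm1} with a second reduction that isolates the Mermin-box content, doing for $\mathcal{Q}$ exactly what Result~\ref{thm1} did for $\mathcal{G}$. First I would apply Result~\ref{thm1} to write any NS box as $P=\mu P^{\alpha\beta\gamma}_{PR}+(1-\mu)P_{L}^{\mathcal{G}=0}$, so it only remains to split the Bell-local remainder $P_{L}^{\mathcal{G}=0}$ into a maximally-local $\mathcal{Q}=2$ box and a box with $\mathcal{G}=\mathcal{Q}=0$. Since every Mermin box has $\mathcal{G}=0$ and, as noted at the end of the subsection on Mermin boxes, every $\mathcal{G}=0$ box in the Bell polytope is a convex combination of the Mermin boxes of Eqs.~(\ref{nMmmm})--(\ref{Mmmm}) and the $16$ deterministic boxes, I would write $P_{L}^{\mathcal{G}=0}=\sum_t h_t P^t_M+\big(1-\sum_t h_t\big)P_D$, with $P_D$ a convex mixture of deterministic boxes.

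Next I would collapse the Mermin part $\sum_t h_t P^t_M$ to a single irreducible Mermin box. Grouping the Mermin boxes by their table of two-point correlators $\braket{A_iB_j}$, Observation~\ref{mlq2} shows that each equivalence class (a maximally-mixed-marginal Mermin box together with the four nonmaximally-mixed-marginal ones sharing its correlators) combines into one maximally-local $\mathcal{Q}=2$ box, and Observation~\ref{imbc} shows that an unequal mixture of Mermin boxes from two different classes reduces to an irreducible Mermin box plus a uniform-mixture $\mathcal{Q}=0$ box. Iterating these two moves turns $\sum_t h_t P^t_M$ into $\nu P^{\alpha\beta\gamma}_{\mathcal{Q}=2}+(\text{residual }\mathcal{Q}=0\text{ weight})$, where $\nu$ is the minimal surviving Mermin weight over all such reductions, and the surviving box can be taken with the same superscript $\alpha\beta\gamma$ as the extracted PR-box using a relation of the form $P^{\alpha\beta\gamma}_M=\tfrac{1}{2}\big(P^{\alpha\beta\gamma}_{PR}+P^{\alpha\oplus1\,\beta\oplus1\,\gamma}_{PR}\big)$ (and when $\mu=0$ the PR-box label is unconstrained, so matching is automatic). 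Absorbing $P_D$, the $\mathcal{Q}=0$ leftovers, and the renormalisation into one local box $P^{\mathcal{G}=0}_{\mathcal{Q}=0}$ then gives Eq.~(\ref{pDecomp}) with $\mu,\nu\ge0$ and $\mu+\nu\le1$. That this residue has $\mathcal{G}=0$ follows because it sits inside the $\mathcal{G}=0$ region of Eq.~(\ref{Gde}) (Observation~\ref{nllg0} plus linearity of $\mathcal{G}$ on that segment), and that it has $\mathcal{Q}=0$ follows from $\nu$ being the maximal irreducible Mermin weight, with linearity of $\mathcal{Q}$ along the segment yielding $\mathcal{Q}(P)=2\nu$, consistent with Definition~\ref{defMD}.

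The main obstacle is this last verification, namely that the residue can be forced to be $\mathcal{G}=0$ \emph{and} $\mathcal{Q}=0$ simultaneously. Because the $\mathcal{Q}=0$ set is nonconvex (Observations~\ref{qzp} and~\ref{qdg}), merely having pulled out the maximal Mermin-box weight does not automatically leave a $\mathcal{Q}=0$ box; one needs the Mermin-discord analogue of Observation~\ref{nllg0} — that the split of the $\mathcal{G}=0$ slice into a $\mathcal{Q}>0$ part and a $\mathcal{G}=\mathcal{Q}=0$ part lets every $\mathcal{G}=0$ box lie on a segment joining a $\mathcal{Q}=2$ maximally-local box and a $\mathcal{G}=\mathcal{Q}=0$ box — together with the remark that this second reduction cannot reintroduce PR-box content, since Mermin and deterministic boxes all have $\mathcal{G}=0$ and extracting them from a $\mathcal{G}=0$ box stays in the $\mathcal{G}=0$ region. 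The other delicate point is showing the iterated reductions terminate at a \emph{single} Mermin box whose label aligns with the PR-box, rather than at a mixture of inequivalent Mermin boxes pointing to different Bell--CHSH facets; here one must invoke the particular facet selected by the $\mathcal{G}$-minimisation in Result~\ref{thm1}.
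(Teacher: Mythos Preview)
Your proposal is essentially the same as the paper's proof: both chain Result~\ref{thm1} with a second extraction of the Mermin-box content from the $\mathcal{G}=0$ remainder, invoking Observations~\ref{qdg}--\ref{imbc} to peel off an irreducible $\mathcal{Q}=2$ box and leave a $\mathcal{G}=\mathcal{Q}=0$ residue, then substitute back into Eq.~(\ref{Gde}). The paper handles your main identified obstacle (forcing the residue to be simultaneously $\mathcal{G}=0$ and $\mathcal{Q}=0$) by exactly the representability-style argument you sketch---if the third term had $\mathcal{G}>0$ or $\mathcal{Q}>0$, the decomposition could not represent boxes in the $\mathcal{G}=\mathcal{Q}=0$ region---and, like you, does not give a detailed verification that the surviving Mermin-box label matches the PR-box label.
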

\begin{proof}
The $\mathcal{G}=0$ box in the canonical decomposition in Eq. (\ref{proofgnz}) is given by,
\be
P^{\mathcal{G}=0}_L=\frac{1}{1-\mathcal{\mu}}\left\{\sum^3_{l=1} p_lP^l_L+p_NP_N+\left(1-\sum_k g_k\right)P_L\right\}. \label{cgz}
\ee
Notice that the first term can have maximally mixed marginals Mermin boxes since it has maximally-local boxes which are uniform mixture of two PR-boxes
and the last term can have nonmaximally mixed marginals Mermin boxes. 
Suppose the $\mathcal{G}=0$ box in Eq. (\ref{cgz}) has Mermin box components, it follows from the observations (\ref{qdg})-(\ref{imbc})  
that it can be decomposed into
an irreducible maximally-local box that has $\mathcal{Q}=2$ and a $\mathcal{G}=\mathcal{Q}=0$ box,
\be
P^{\mathcal{G}=0}_L= q_MP^{\alpha\beta\gamma}_{\mathcal{Q}=2}+(1-q_M)P^{\mathcal{G}=0}_{\mathcal{Q}=0}. \label{GGO}
\ee
Here $q_M$ is the maximal irreducible component of the  box with $\mathcal{Q}=2$
and $P^{\mathcal{G}=0}_{\mathcal{Q}=0}$ is a local box with $\mathcal{G}=\mathcal{Q}=0$.
Substituting Eq. (\ref{GGO}) in Eq. (\ref{Gde}), we get the following canonical decomposition for any NS box,
\be
P=\mu P^{\alpha\beta\gamma}_{PR}+{\nu}P^{\alpha\beta\gamma}_{\mathcal{Q}=2}+(1-{\mu}-{\nu})
P^{\mathcal{G}=0}_{\mathcal{Q}=0}, 
\ee
where $\nu=\left(1-\mu\right)q_M$. In this decomposition, the box in the third term must have $\mathcal{G}=\mathcal{Q}=0$. Otherwise,
it would lead to the contradiction that if it had $\mathcal{Q}>0$ or $\mathcal{G}>0$, then the  decomposition would not
represent the boxes in the $\mathcal{G}=\mathcal{Q}=0$ region.
\end{proof}

The $3$-decomposition fact given in Eq. (\ref{pDecomp}) implies that any NS box can also be written in the following canonical form,
\be
P=\nu P^{\alpha\beta\gamma}_{\mathcal{Q}=2}+(1-\nu)P^{\mathcal{G}\ge0}_{\mathcal{Q}=0}, \label{Qcanonical}
\ee
where $P^{\mathcal{G}\ge0}_{\mathcal{Q}=0}=\frac{1}{1-\nu}\left\{\mu P^{\alpha\beta\gamma}_{PR}+(1-\nu-\mu)
P^{\mathcal{G}=0}_{\mathcal{Q}=0}\right\}$ is a $\mathcal{Q}=0$ box since $P^{\mathcal{G}=0}_{\mathcal{Q}=0}$ cannot have an irreducible 
PR-box component. From the linearity of $\mathcal{Q}$ with respect to the decomposition given in Eq. (\ref{Qcanonical}), it follows that 
$\mathcal{Q}(P)=\nu\mathcal{Q}\left(P^{\alpha\beta\gamma}_{\mathcal{Q}=2}\right)+(1-\nu)\mathcal{Q}\left(P_{\mathcal{Q}=0}\right)=2\nu$.  
Thus, any NS box that has $\mathcal{Q}>0$ lies on a line segment joining a maximally-local box that has $\mathcal{Q}=2$ 
and a $\mathcal{Q}=0$ box.  

\subsection{Monogamy between the measures} 
The probability constraint  $\mu+\nu\le1$ in the $3$-decomposition of any NS box given in Eq. (\ref{pDecomp}) 
implies a trade-off between Bell and Mermin discord as follows,
\be
\mathcal{G}+2\mathcal{Q}\le4.\label{monogamyGQ}
\ee
This trade-off relation reveals monogamy between the two types of nonclassicality of quantum correlations 
which we illustrate with the observations given for quantum correlations in Sec. \ref{prl}.
For the incompatible measurements that gives rise to EPR-steering, the maximally entangled state 
does not give rise to Bell discord since $\mathcal{Q}=2$ implies $\mathcal{G}=0$. 
Due to this monogamous property,
the correlation arising from the nonmaximally entangled states in Eq. (\ref{nmE}) has only Mermin discord i.e., $\mathcal{Q}=2\sin2\theta$ and $\mathcal{G}=0$
for the incompatible measurements that gives rise to the KS paradox.
Similarly, we observe that the correlations arising from the nonmaximally entangled states, $\ket{\psi(\theta)}$, gives rise to only
Bell discord i.e., $\mathcal{G}=2\sqrt{2}\sin2\theta$ and $\mathcal{Q}=0$
for the measurements that gives rise to the Tsirelson box. 
For general incompatible measurements, quantum correlations can have Bell discord and Mermin discord simultaneously,
however, a trade-off exists between Bell and Mermin discord as given in Eq. (\ref{monogamyGQ}); for instance,
the correlations arising from the Bell state in Eq. (\ref{MeCNL}) have Bell discord and Mermin discord simultaneously without violating the trade-off relation between them 
since  these correlations have $\mathcal{G}+\mathcal{Q}=4\sqrt{p}\le4$.
The monogamy relation obtained here is analogous to the monogamy relation between local contextuality and nonlocality 
derived by Kurzy\'{n}ski \etal \cite{KCK} under the no-disturbance principle as both the relations reveal monogamy between two different types of nonclassicality.
\section{Quantum correlations}\label{QC}
We will analyze nonclassicality of quantum correlations arising from local projective measurements on the two-qubit systems along the 
directions $\hat{a}_i$ and $\hat{b}_j$ on the respective qubits.  
The joint state of two-qubit system is given 
by the density operator $\rho_{AB}$ in the complex vector space 
${\cal L}({{\cal H}_A^2\otimes{\cal H}_B^2})$, which in the Bloch representation can be expressed as follows:
\be
\rho_{AB}\!=\!\frac{1}{4}\left[\openone\otimes\openone\!+\!\sum_i r_i\sigma_j\otimes\openone
\!+\!\sum_i s_i\openone\otimes \sigma_i\!+\!\sum_{ij}T_{ij}\sigma_i\otimes\sigma_j\right],
\ee
where $\vec{\sigma}_i$, $i\in{1,2,3}$, are the Pauli matrices, where
$r_i=\mathrm{Tr}(\rho_{AB}\sigma_i\otimes\openone)$ and $s_i=\mathrm{Tr}(\rho_{AB}\openone\otimes\sigma_i)$ are the local Bloch vectors,
and $T_{ij}=\mathrm{Tr}(\rho_{AB}\sigma_i\otimes\sigma_j)$ is the correlation tensor. The set of separable states forms a convex subset of the set of all states.
It is known that the set of zero quantum discord states is nonconvex \cite{QCall,Caves}.
Similarly, as we show, the set of $\mathcal{G}=\mathcal{Q}=0$ boxes is nonconvex. Indeed, we show that any nonzero
quantum discord state which is neither a classical-quantum state nor a quantum-classical state 
can give rise to nonzero Bell discord or/and Mermin discord for suitable incompatible measurements.

We will apply Bell discord and Mermin discord to quantify nonclassicality of correlations arising from the pure entangled states and the Werner states.
For these states, a nonzero Bell discord originates from incompatible measurements which give rise to Bell nonlocality.
Similarly, a nonzero Mermin discord originates from incompatible measurements which give rise to EPR-steering.
We will apply these measures to various states in order to illustrate the new insights that may be obtained regarding the origin of
nonclassicality.
\subsection{Maximally entangled state}
We study $3$-decomposition of the correlations arising from the maximally entangled state.
For the measurement settings: 
${\vec{a}_0}=\hat{x}$, ${\vec{a}_1}=\hat{y}$,
${\vec{b}_0} =\sqrt{p}\hat{x}-\sqrt{1-p}\hat{y}$ and ${\vec{b}_1}=\sqrt{1-p}\hat{x}+\sqrt{p}\hat{y}$, where $\frac{1}{2}\le p \le1$, the correlations
arising from the Bell state, $\ket{\psi^+}$, can be decomposed into PR-box, a Mermin box which is an uniform mixture of two PR-boxes and white noise as follows,
\be
P=\mu P^{000}_{PR}+\nu \left(\frac{P^{000}_{PR}+P^{110}_{PR}}{2}\right)+(1-\mu-\nu)P_N, \label{meb1}
\ee 
where $\mu=\sqrt{1-p}$ 
and $\nu=\sqrt{p}-\sqrt{1-p}$. 
These correlations have Bell discord and Mermin discord simultaneously when $\frac{1}{2}<p<1$:
$\mathcal{G}=4\sqrt{1-p}>0$ if $p\ne1$ and $\mathcal{Q}=2(\sqrt{p}-\sqrt{1-p})>0$ if $p\ne\frac{1}{2}$. 
When the Bell state gives rise to a $3$-decomposition, 
the correlation nonmaximally violates a Bell-CHSH inequality and an EPR-steering inequality simultaneously: 
The correlations in Eq. (\ref{meb1}) violate the Bell-CHSH inequality i.e., $\mathcal{B}_{000}=2\left(\sqrt{p}+\sqrt{1-p}\right)>2$ if
$p\ne1$ and the EPR-steering inequality in Eq. (\ref{eprst}) if $p\ne\frac{1}{2}$. 
When the settings becomes optimal for the violation of the Bell-CHSH inequality which happens at $p=\frac{1}{2}$, 
the correlation has $\mathcal{G}=2\sqrt{2}$ and $\mathcal{Q}=0$. 
When the settings becomes optimal for the violation of the EPR-steering inequality which happens at $p=1$, 
the correlation has $\mathcal{Q}=2$ and $\mathcal{G}=0$. 
\subsection{Pure nonmaximally entangled states}
We study the correlations arising from the Schmidt states:
\be
\rho_{S}\!=\!\frac{1}{4}\!\left(\!\openone\!\otimes\!\openone\!+\!c(\!\sigma_z\!\otimes\!\openone\!
+\!\openone\!\otimes\!\sigma_z\!)\!+\!s(\!\sigma_x\!\otimes\!\sigma_x\!-\!\sigma_y\!\otimes\!\sigma_y\!)
\!+\!\sigma_z\!\otimes\!\sigma_z\!\right)\!, \label{Schmidt}
\ee
where $c=\cos2\theta$, $s=\sin2\theta$ and $0\le\theta\le\frac{\pi}{4}$. 
These pure states have Schmidt decomposition \cite{Sch1,Sch2} as given in Eq. (\ref{nmE}), hence the name.
Entanglement of the Schmidt states can be quantified by the tangle, $\tau=s^2$ \cite{tangle}.  
The nonmaximally entangled Schmidt states can give rise to (i) a maximally mixed marginals box when measurements performed lie in the $xy$-plane
or (ii) a nonmaximally mixed marginals box when measurements performed lie in the $xz$-plane.
 
\subsubsection{Bell-Schmidt box}
The Bell-Schmidt boxes are the correlations arising from the Schmidt states which have only nonzero Bell discord. 

(i) For the measurement settings: 
${\vec{a}_0}=\hat{x}$, ${\vec{a}_1}=\hat{y}$,
${\vec{b}_0} =\frac{1}{\sqrt{2}}(\hat{x}-\hat{y})$ and ${\vec{b}_1}=\frac{1}{\sqrt{2}}(\hat{x}+\hat{y})$, 
the Schmidt states give to   
the noisy PR-box which is a mixture of a PR-box and white noise as follows:
\ba
P=\frac{s}{\sqrt{2}}P_{PR}+\left(1-\frac{s}{\sqrt{2}}\right) P_N, \label{BSb}
\ea
These correlations violate the Bell-CHSH inequality i.e., $\mathcal{B}_{000}=2\sqrt{2\tau}>2$ if $\tau>\frac{1}{2}$ 
and have Bell discord $\mathcal{G}=2\sqrt{2\tau}>0$ if $\tau>0$. Notice that the irreducible PR-box component in the local correlations in Eq. (\ref{BSb}) is due 
to the incompatible measurements that gives rise to nonlocality. Thus, Bell discord of these local correlations  
reveals nonclassicality of the nonmaximally entangled states originating from nonlocality. 

\begin{figure}
\includegraphics[width=0.45\textwidth]{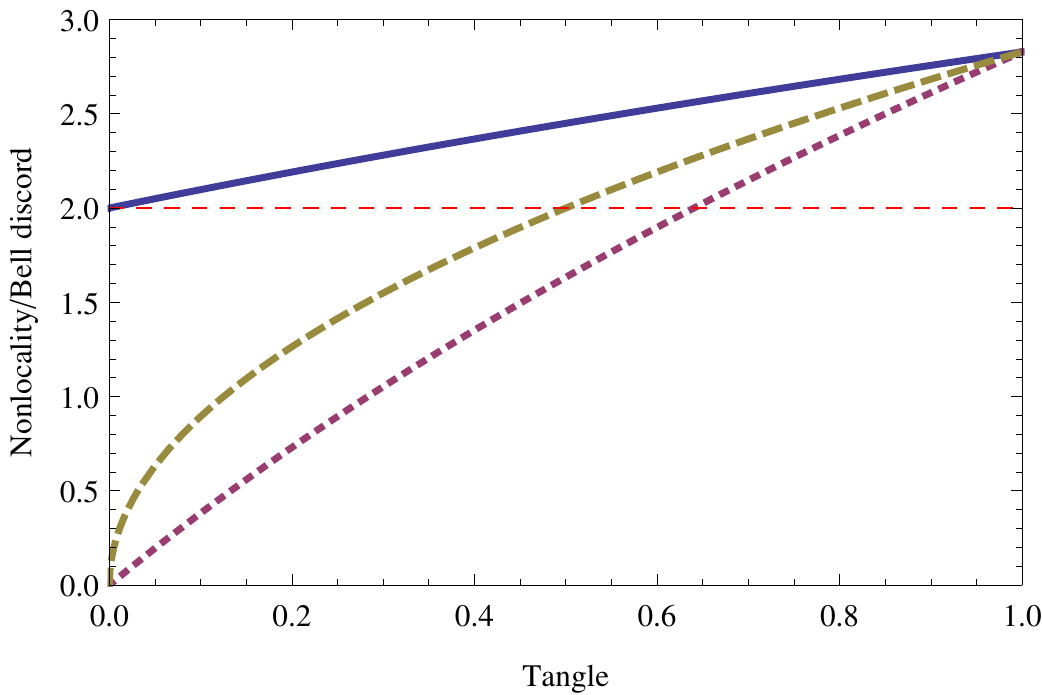} 
\caption{Dashed line shows the plots of the Bell-CHSH inequality violation
and Bell discord for the JPD given in Eq. (\ref{BSb}). Solid and dotted lines show the plots of the Bell-CHSH inequality violation and 
Bell discord respectively for the JPD given in Eq. (\ref{PRQ}).
We observe that the JPD in Eq. (\ref{PRQ}) gives optimal violation of the Bell-CHSH inequality, however, it does not give optimal Bell discord as the JPD
in Eq. (\ref{BSb}) has more Bell discord than this JPD.}\label{plotine}
\end{figure}

(ii) Popescu and Rohrlich showed that all the pure entangled states in Eq. (\ref{Schmidt}) give rise to Bell nonlocality
for the settings \cite{PRQB}: 
${\vec{a}_0}=\hat{z}$, ${\vec{a}_1}=\hat{x}$,
${\vec{b}_0}=\cos t\hat{z}+\sin t\hat{x}$ and ${\vec{b}_1}=\cos t\hat{z}-\sin t\hat{x}$, 
where $\cos t=\frac{1}{\sqrt{1+s^2}}$ as the correlations violate the Bell-CHSH inequality 
i.e., $\mathcal{B}_{000}=2\sqrt{1+\tau}>2$ if $\tau>0$ for this settings. 
These correlations can be decomposed into PR-box and a local box which has $\mathcal{G}=0$ and nonmaximally mixed marginals,
\ba
P&=&s^2\left[\frac{1}{\sqrt{1+s^2}}P_{PR}+\left(1-\frac{1}{\sqrt{1+s^2}}\right)P_N\right]\nonumber\\
&&+\left(1-s^2\right)P^{\mathcal{G}=0}_L(\rho).\label{PRQ}
\ea
Here $P^{\mathcal{G}=0}_L(\rho)$ is a distribution arising from the product state \footnote{$P^{\mathcal{G}=0}_L(\rho)$ is not a valid JPD as the state $\rho$
is not a physical state.}, 
\be
\rho=\rho_A \otimes \rho_B, \label{ScPr}
\ee
where
\be
\rho_A=\rho_B=\frac{1}{2}\left[1+\frac{c}{1-s^2}\right]\ket{0}\bra{0}+\frac{1}{2}\left[1-\frac{c}{1-s^2}\right]\ketbra{1}{1},\nonumber
\ee
for the chosen measurements. 
The correlations in Eq. (\ref{PRQ}) have Bell discord $\mathcal{G}=\frac{4\tau}{\sqrt{1+\tau}}>0$ if $\tau>0$.

Notice that the correlations in Eq. (\ref{PRQ}) have less irreducible PR-box component than the correlations in Eq. (\ref{BSb}) for a given amount of entanglement
quantified by the tangle (see fig. \ref{plotine}). Thus,  
when the pure nonmaximally entangled states give rise to optimal violation of the Bell-CHSH inequality, the correlations do not have
optimal Bell discord and has nonmaximally mixed marginals.  
\subsubsection{Mermin-Schmidt box}
Mermin-Schmidt boxes are the local correlations arising from the Schmidt states which have only Mermin discord.
The local correlations which violate an EPR-steering inequality are the subset of the local correlations which have Mermin discord. 
The following Mermin inequalities,
\be
\mathcal{M}_{\alpha\beta\gamma}\le\sqrt{2},
\ee
where $\mathcal{M}_{\alpha\beta\gamma}$ are the Mermin operators given in Eq. (\ref{bimi}), forms the complete set of EPR-steering inequalities 
if the measurement operators on Alice's or Bob's side are anti-commuting qubit observables \cite{EPRsi}; suppose $B_0=\sigma_x$
and $B_1=\sigma_y$, then these inequalities can be obtained from 
the EPR-steering inequality in Eq. (\ref{eprst}) by LRO.

\begin{figure}
\includegraphics[width=0.45\textwidth]{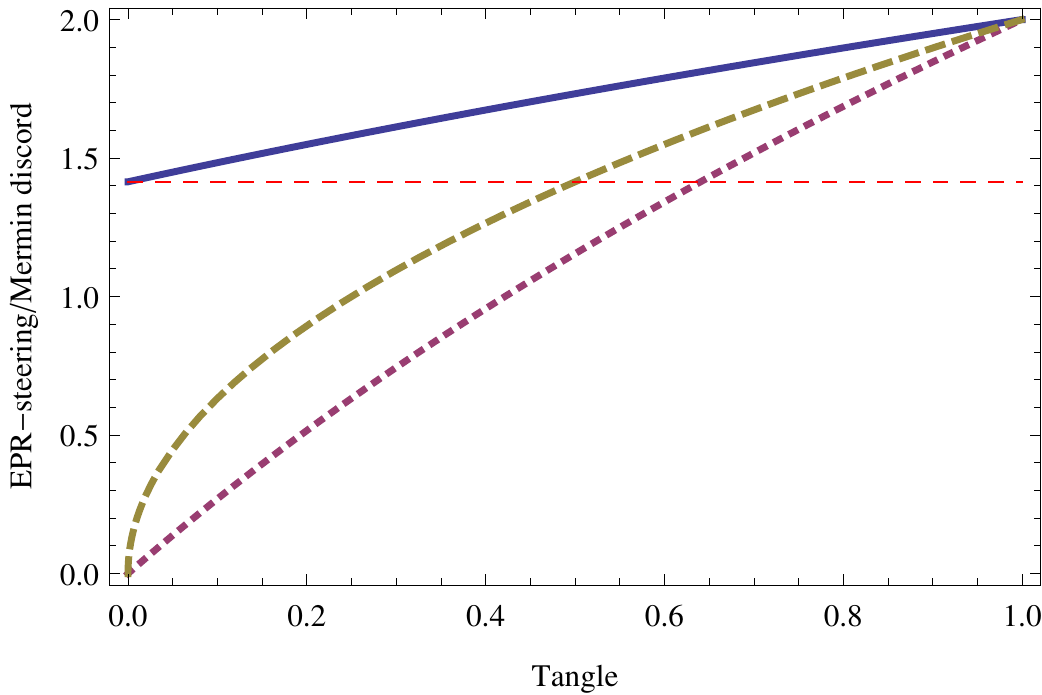} 
\caption{Dashed line shows the plots of the EPR-steering violation
and Mermin discord for the JPD given in Eq. (\ref{MSb}). Solid and dotted lines show the plots of the EPR-steering violation and 
Mermin discord respectively for the JPD given in Eq. (\ref{CSB}). We observe that the JPD in Eq. (\ref{CSB}) has less Mermin discord than
the JPD in Eq. (\ref{MSb}) despite the fact that the former gives rise to optimal violation of the EPR-steering inequality.}\label{figplot2}
\end{figure}

(i) For the settings
${\vec{a}_0}=\hat{x}$, ${\vec{a}_1}=-\hat{y}$,
${\vec{b}_0}=\hat{y}$ and ${\vec{b}_1}=\hat{x}$, the Schmidt states give rise to the noisy Mermin-box which is a mixture of a Mermin box and white noise 
as follows:
\be
P=s\left(\frac{P^{000}_{PR}+P^{111}_{PR}}{2}\right) +(1-s) P_N. \label{MSb}
\ee 
These correlations violate 
the EPR-steering inequality i.e., $\mathcal{M}_{000}=2\sqrt{\tau}>\sqrt{2}$ if $\tau>\frac{1}{2}$ and have Mermin discord $\mathcal{Q}=2\sqrt{\tau}>0$ if $\tau>0$.
Notice that the irreducible Mermin box component in the local correlations in Eq. (\ref{MSb}) is due to the incompatible measurements that gives rise to EPR-steering.
Thus, Mermin discord of these local correlations reveals nonclassicality of the entangled states originating from EPR-steering

 (ii) All the pure entangled states violate the EPR-steering inequality for the settings,
${\vec{a}_0}=\frac{1}{\sqrt{2}}(\hat{z}+\hat{x})$, ${\vec{a}_1}=\frac{1}{\sqrt{2}}(\hat{z}-\hat{x})$,
${\vec{b}_0}=\cos t\hat{z}-\sin t\hat{x}$, and ${\vec{b}_1}=\cos t\hat{z}+\sin t\hat{x}$, 
where $\cos t=\frac{1}{\sqrt{1+s^2} }$ since the Schmidt states give rise to
$\mathcal{M}_{000}=\sqrt{2}\sqrt{1+\tau}>\sqrt{2}$ if $\tau>0$ for this settings.
The correlations can be decomposed into Mermin box
and a local box which has $\mathcal{Q}=0$ and nonmaximally mixed marginals,
\ba
P&=&s^2\left[\frac{\sqrt{2}}{\sqrt{1+s^2}}\left(\frac{P^{000}_{PR}+P^{111}_{PR}}{2}\right)+\left(1-\frac{\sqrt{2}}{\sqrt{1+s^2}}\right)P_N\right]\nonumber\\
&&+\left(1-s^2\right)P_{\mathcal{Q}=0}(\rho), \label{CSB}
\ea
where $P_{\mathcal{Q}=0}(\rho)$ is a distribution arising from the state in Eq. (\ref{ScPr}).
The correlations in Eq. (\ref{CSB}) have Mermin discord $\mathcal{Q}=\frac{2\sqrt{2}\tau}{\sqrt{1+\tau}}>0$ if $\tau>0$. 

Similar to the Bell-Schmidt boxes in Eqs. (\ref{BSb}) and (\ref{PRQ}), the correlations in Eq. (\ref{MSb}) have more irreducible Mermin box component than
the correlations in Eq. (\ref{CSB}) for a given amount of entanglement (see fig. \ref{figplot2}).
Thus, when the Schmidt states give rise to optimal violation of an EPR-steering inequality, 
the correlations do not have optimal Mermin discord and have nonmaximally
mixed marginals.   
\subsubsection{Bell-Mermin-Schmidt box}
Bell-Mermin-Schmidt boxes are the correlations arising from the Schmidt states which have nonzero Bell and 
Mermin discord simultaneously.

\begin{figure}
\includegraphics[width=0.45\textwidth]{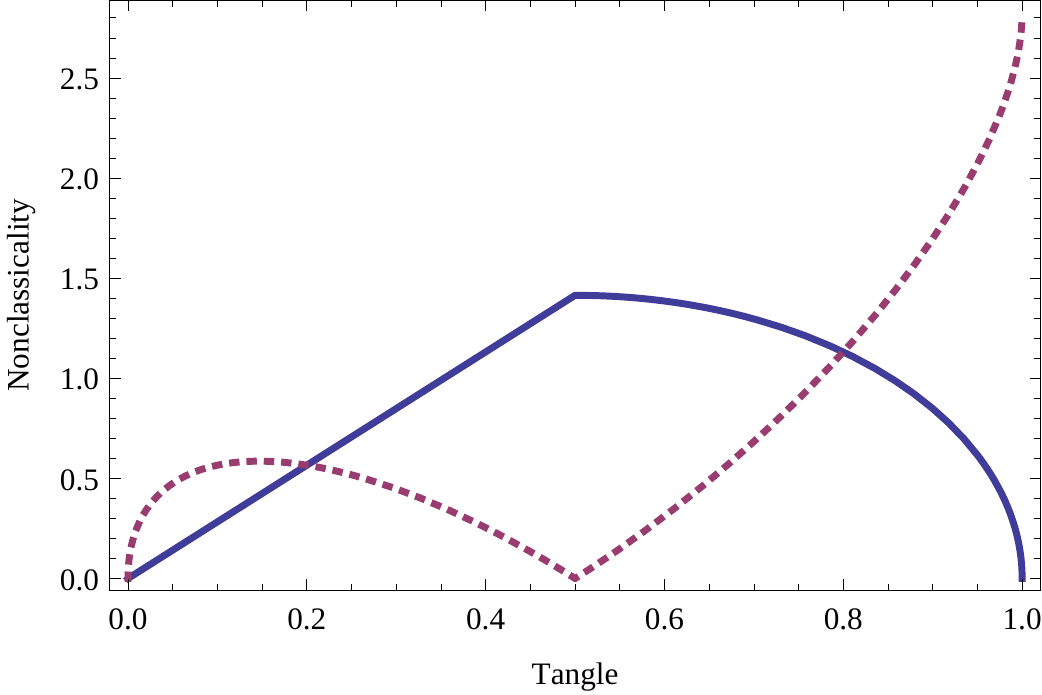} 
\caption{Bell and Mermin discord of the JPD given in Eq. (\ref{BMSb}) are shown by dotted and solid lines respectively.}\label{pt3}
\end{figure}

(i) We define the settings:
${\vec{a}_0}=s\hat{x}+c\hat{y}$,
${\vec{a}_1}=c\hat{x}-s\hat{y}$,
${\vec{b}_0}=\frac{1}{\sqrt{2}}(\hat{x}+\hat{y})$ and
${\vec{b}_1}=\frac{1}{\sqrt{2}}(\hat{x}-\hat{y})$.
For this settings, the correlations have a $3$-decomposition as follows,
\ba
P&=&\left(1-|q|-|g|\right) P_N+ |q|\left(\frac{P^{000}_{PR}+P^{11\gamma}_{PR}}{2}\right)\nonumber\\
&+&|g|\left[\frac{1}{\sqrt{2}}P^{000}_{PR}+\left(1-\frac{1}{\sqrt{2}}\right)P_N\right], \label{BMSb}
\ea
where
$q=\frac{|c-s|-|c+s|}{\sqrt{2}}$ and $g=s(s-c)$.
These correlations have nonzero Bell and Mermin discord as follows (see fig. \ref{pt3}), 
\be
\mathcal{G}=2\sqrt{2\tau}|s-c|>0 \quad \text{except when} \quad \theta  \ne0, \frac{\pi}{8} \nonumber
\ee
and 
\ba
\mathcal{Q}&=&\sqrt{2\tau}\Big||c+s|-|c-s|\Big|>0 \quad \text{except when} \quad \theta \ne0, \frac{\pi}{4}\nonumber\\
&=&\left\{\begin{array}{lr}
2\sqrt{2}\tau \quad \text{when} \quad c>s\\ 
2\sqrt{2\tau(1-\tau^2)} \quad \text{when} \quad s>c.\\ 
\end{array}
\right.\nonumber
\ea
Notice that the correlation in Eq. (\ref{BMSb}) becomes Bell-Schmidt box in Eq. (\ref{BSb}) for $\theta=\pi/4$ 
since the settings becomes optimal for Bell discord 
and becomes Mermin-Schmidt box in Eq. (\ref{MSb}) for $\theta=\pi/8$ since the settings becomes optimal for Mermin discord.

\begin{figure}
\includegraphics[width=0.45\textwidth]{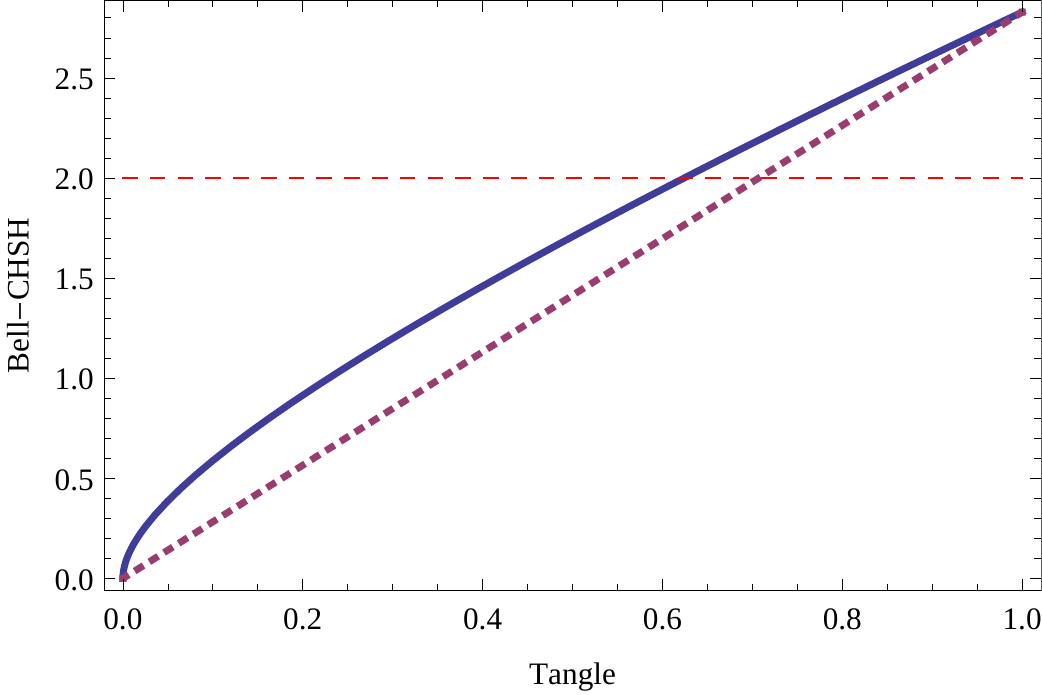} 
\caption{The violation of the Bell-CHSH inequality for the JPD in Eqs. (\ref{BMSb}) and (\ref{BMSb1}) are shown by dotted and solid lines respectively.}\label{pt4}
\end{figure}

(ii) For the settings that lie in the $xz$-plane: ${\vec{a}_0}=c\hat{x}+s\hat{z}$,
${\vec{a}_1}=s\hat{x}-c\hat{z}$,
${\vec{b}_0}=\frac{1}{\sqrt{2}}(\hat{x}+\hat{z})$ and
${\vec{b}_1}=\frac{1}{\sqrt{2}}(-\hat{x}+\hat{z})$, the correlations have the following $3$-decomposition, 
\ba
P&=&\left(1-\mu-\nu\right) P^{\mathcal{G}=0}_{\mathcal{Q}=0}+\nu\left(\frac{P^{000}_{PR}+P^{11\gamma}_{PR}}{2}\right)+\mu P^{000}_{PR}, \label{BMSb1}
\ea
where the PR-box and Mermin box components, $\mu$ and $\nu$, 
are the same as for the correlations in Eq. (\ref{BMSb}).
The $\mathcal{G}=\mathcal{Q}=0$ box, $P^{\mathcal{G}=0}_{\mathcal{Q}=0}$, in Eq. (\ref{BMSb1}) has nonmaximally mixed marginals, whereas the 
$\mathcal{G}=\mathcal{Q}=0$ box in Eq. (\ref{BMSb})
has maximally mixed marginals. Thus, the correlations in Eqs. (\ref{BMSb}) and (\ref{BMSb1}) differ only by their marginals because of this reason
the violation of the Bell-CHSH inequality is larger for the latter correlation than the former correlation (see fig. \ref{pt4}).
\subsection{Mixed quantum discordant states} 
Consider the correlations arising from the Werner states,
\be
\rho_W=p\ketbra{\psi^+}{\psi^+}+(1-p)\frac{\openone}{4}. 
\ee
which are entangled iff $p>\frac{1}{3}$ \cite{Werner}. It is known that 
the Werner states have nonzero quantum discord if $p>0$ \cite{WZQD}, similarly, we show that the Werner states can have Bell discord and Mermin discord 
if $p>0$. 
\subsubsection{Bell-Werner box}
For the settings that gives rise to the optimal Bell-Schmidt box given in Eq. (\ref{BSb}), the correlations can be decomposed into the Tsirelson box and white noise
as follows, 
\be
P=p\left[\frac{1}{\sqrt{2}} P^{000}_{PR}+\left(1-\frac{1}{\sqrt{2}}\right)P_N\right]+(1-p)P_N. \label{BW}
\ee
These correlations violate the Bell-CHSH inequality if $p>\frac{1}{\sqrt{2}}$ and have Bell discord $\mathcal{G}=2\sqrt{2}p>0$ if $p>0$. 
Bell discord of the local correlations in Eq. (\ref{BW}) reveals nonclassicality of the entangled states which cannot give rise to the violation of a 
Bell-CHSH inequality and separable nonzero quantum discord states
originating from nonlocality.
\subsubsection{Mermin-Werner box}
For the settings that gives rise to the optimal Mermin-Schmidt box given in Eq. (\ref{MSb}), the correlations can be decomposed into Mermin box and white noise
as follows,
\be
P=(1-p) P_N+ p\left(\frac{P^{000}_{PR}+P^{111}_{PR}}{2}\right). \label{MW}
\ee 
These correlations violate the EPR-steering inequality if $p>\frac{1}{\sqrt{2}}$ and 
have Mermin discord $\mathcal{Q}=2p>0$ if $p>0$. 
Mermin discord of the local correlations in Eq. (\ref{MW}) reveals nonclassicality of the entangled states which cannot give rise to the violation of an 
EPR-steering inequality and separable nonzero quantum discord states 
originating from EPR-steering.
\subsubsection{Bell-Mermin-Werner box}
For the settings
${\vec{a}_0}=\sqrt{p}\hat{x}+\sqrt{1-p}\hat{y}$,
${\vec{a}_1}=\sqrt{1-p}\hat{x}-\sqrt{p}\hat{y}$,
${\vec{b}_0}=\frac{1}{\sqrt{2}}(\hat{x}+\hat{y})$ and
${\vec{b}_1}=\frac{1}{\sqrt{2}}(\hat{x}-\hat{y})$, the Werner states give rise to 
Bell-Mermin box which has Bell and Mermin discord simultaneously as follows, 
\be
\mathcal{G}=2\sqrt{2}p|\sqrt{p}-\sqrt{1-p}|>0 \quad \text{except when} \quad p\ne0, \frac{1}{2} \nonumber
\ee
and 
\ba
\mathcal{Q}&=&\sqrt{2}p|\sqrt{p}+\sqrt{1-p}-\left|\sqrt{p}-\sqrt{1-p}\right||\nonumber\\
&>&0 \quad \text{except when} \quad p\ne0, 1 \nonumber\\
&=&\left\{\begin{array}{lr}
2\sqrt{2}p\sqrt{p} \quad \text{when} \quad 0\le p\le\frac{1}{2}\\ 
2\sqrt{2}p\sqrt{1-p} \quad \text{when} \quad \frac{1}{2}\le p\le1.\\ 
\end{array}
\right.\nonumber
\ea
The correlations have a $3$-decomposition as follows:
\be
P=(1-q-r) P_N+ q\left(\frac{P^{000}_{PR}+P^{11\gamma}_{PR}}{2}\right)+rP^{000}_{PR}, \label{BMW}
\ee
where $q=\frac{p}{\sqrt{2}}|\sqrt{p}+\sqrt{1-p}-\left|\sqrt{p}-\sqrt{1-p}\right||$ and $r=\frac{p}{\sqrt{2}}\left|\sqrt{p}-\sqrt{1-p}\right|$.
Since the settings becomes optimal for Bell and Mermin discord when $p=1$ and $p=\frac{1}{2}$ respectively, the correlation in Eq. (\ref{BMW}) becomes
the Bell-Werner box in Eq. (\ref{BW}) at $p=1$ and the Mermin-Werner box in Eq. (\ref{MW}) at $p=\frac{1}{2}$.

It has been shown that the quantum correlations quantified by quantum discord in mixed states plays the role of entanglement in pure states and 
the Werner states are maximally quantum-correlated states \cite{QQC}.
Similarly, we observe that the correlations arising from the Werner states
in Eqs. (\ref{BW})-(\ref{BMW}) and the correlations arising from the Schmidt states in 
Eqs. (\ref{BSb}), (\ref{MSb}) and (\ref{BMSb}) have the similar decompositions; 
the parameter $p$ (quantum discord) in the Werner states plays the same role as the parameter $s$ (entanglement) in the Schmidt states.  
\subsection{Mixed nonmaximally entangled states}
We consider the correlations arising from the mixed states which can be written as a mixture of the Bell state and the classically-correlated state,
\be
\rho= p \ketbra{\psi^+}{\psi^+}+(1-p) \rho_{CC}, \label{rCC}
\ee
where $\rho_{CC}=\frac{1}{2}(\ketbra{00}{00}+\ketbra{11}{11})$. 
We illustrate that for the measurements that gives rise to optimal Bell discord, these states have the same behavior as the Werner states, 
and, for the measurements that gives rise to optimal Bell nonlocality, these states and the Schmidt states in Eq. (\ref{Schmidt}) 
have similar behavior:

 For the settings that gives rise to the noisy PR-box in Eq. (\ref{BSb}),
the correlations arising from the states in Eq. (\ref{rCC}) have the same decomposition as for the Bell-Werner box in Eq. (\ref{BW}) because the classically-correlated
state in Eq. (\ref{rCC}) gives rise to white noise for this settings which implies that 
the correlations violate the Bell-CHSH inequality if $p>\frac{1}{\sqrt{2}}$ and have Bell discord $\mathcal{G}=2\sqrt{2}p>0$ if 
$p>0$.

 For the settings
${\vec{a}_0}=\hat{z}$, ${\vec{a}_1}=\hat{x}$,
${\vec{b}_0}=\cos t\hat{z}+\sin t\hat{x}$ and ${\vec{b}_1}=\cos t\hat{z}-\sin t\hat{x}$, where $\cos t=\frac{1}{\sqrt{1+p^2}}$, the correlations arising
from the mixed states in Eq. (\ref{rCC})
violate the Bell-CHSH inequality i.e., $\mathcal{B}_{000}=2\sqrt{1+p^2}>2$ if $p>0$ and have Bell discord
$\mathcal{G}=\frac{4p^2}{\sqrt{1+p^2}}$. Thus, the correlations have analogous properties of the Bell-Schmidt box in Eq. (\ref{PRQ});
the parameter $p$ in the mixed entangled states plays the role of the parameter $s$ in the Schmidt states.
\subsection{Classical-quantum and quantum-classical states}
Here we show that all classical-quantum (CQ) and quantum-classical (QC) states have $\mathcal{G}=\mathcal{Q}=0$ for all measurement settings. 
The CQ states can be written as,
\be 
\rho_{CQ}=\sum^{1}_{i=0}p_i\ketbra{i}{i}\otimes \chi_i \label{c-q},
\ee
whereas QC states can be written as, 
\be
\rho_{QC}=\sum^1_{j=0}p_j\phi_j\otimes \ketbra{j}{j}\label{q-c},
\ee 
where $\{\ket{i}\}$ and $\{\ket{j}\}$ are the orthonormal sets and $\chi_i$ and  $\phi_j$ are the quantum states. 
Despite the CQ and QC states are not the product states in general, their joint expectation value can be written in the factorized form, 
$\braket{AB}= f(\hat{a})f(\hat{b})$, here $\hat{a}$ and $\hat{b}$ are the measurement settings chosen by Alice and Bob respectively.
This factorization of the expectation values for the CQ and QC states 
implies that they cannot have nonzero Bell discord or nonzero Mermin discord for all measurements.
\begin{proof}
In the Bloch sphere representation, the CQ
states in Eq. (\ref{c-q}) can be written as:
\begin{eqnarray}
\rho_{CQ}&=&\frac{p_0}{4}\left(\openone
+\hat{r}\cdot\vec{\sigma}\right)\otimes\left(\openone
+\vec{s}_0\cdot\vec{\sigma}\right)\nonumber \\&&+\frac{p_1}{4}\left(\openone
-\hat{r}\cdot\vec{\sigma}\right)\otimes\left(\openone
+\vec{s}_1\cdot\vec{\sigma}\right),
\end{eqnarray}
where $\hat{r}$ is the Bloch vector for the projectors $\ketbra{i}{i}$ and
$\vec{s}_i$ are the Bloch vector for the states $\chi_i$. Notice that $\hat{r}$ appears twice in the above decomposition because of the orthogonality of 
projectors on Alice's side; as a result of this, the expectation value factorizes as follows,
\be
\braket{A_iB_j}=\left(\hat{a_i}\cdot\hat{r}\right) \left(\hat{b_j}\cdot(p_0\vec{s}_0-p_1\vec{s}_1)\right), \label{product-like}
\ee
whose form is similar to that of a product state, $\rho=\rho_A\otimes\rho_B=\frac{1}{4}\left[\left(\openone
+\vec{r}\cdot\vec{\sigma}\right)\otimes\left(\openone
+\vec{s}\cdot\vec{\sigma}\right)\right]$. 
The analysis in the previous sections shows that the optimal settings have the following property: 
for the Bell discord one has, $\hat{a}_0\cdot\hat{a}_1=0$, $\hat{b}_0\cdot\hat{b}_1=0$ and $\hat{a}_i\cdot\hat{b}_j=\pm\frac{1}{\sqrt{2}}$, whereas
for the Mermin discord one has: $\hat{a}_0\cdot\hat{a}_1=0$, $\hat{b}_0\cdot\hat{b}_1=0$ and $\hat{a}_i=\pm\hat{b}_j$. 
Since the optimal settings that maximizes $\mathcal{G}$ and $\mathcal{Q}$ have the common property that measurements on
Alice's side or Bob's side are orthogonal, we choose orthogonal measurements on Alice' side
to maximize $\mathcal{G}$ and $\mathcal{Q}$ with respect to the correlation given in Eq. (\ref{product-like}).
Suppose we choose $\hat{a}_0 \cdot \hat{r}=1$, the orthogonality  
condition ($\hat{a}_0\cdot\hat{a}_1=0$) implies that $\hat{a}_1 \cdot \hat{r}=0$. For this choice of orthogonal measurements on Alice's side,   
$\mathcal{B}_{00}=|(\hat{b_0}+\hat{b_1})\cdot(p_0\vec{s_0}-p_1\vec{s_1})|$,
$\mathcal{B}_{01}=|(\hat{b_0}-\hat{b_1})\cdot(p_0\vec{s_0}-p_1\vec{s_1})|$,
$\mathcal{B}_{10}=|(\hat{b_0}+\hat{b_1})\cdot(p_0\vec{s_0}-p_1\vec{s_1})|$, and 
$\mathcal{B}_{11}=|(\hat{b_0}-\hat{b_1})\cdot(p_0\vec{s_0}-p_1\vec{s_1})|$ which
implies that $\mathcal{G}=\mathcal{Q}=0$ for all possible measurements on Bob's side. Similarly, we can prove that $\mathcal{G}=\mathcal{Q}=0$ 
for the QC states since $\mathcal{G}$ and $\mathcal{Q}$ are symmetric under the parties permutation.
\end{proof}
Since the joint expectation value of any quantum-correlated state, which is neither a CQ state nor a QC state, cannot be written in the factorized form 
i.e., $\braket{AB}\ne f(\hat{a})f(\hat{b})$, 
all quantum correlated states which have nonzero left and right quantum discord can give rise to nonzero Bell discord or/and Mermin discord.
\subsection{Tsirelson bound}
\begin{figure}\label{Tsirelson}
\centering
\includegraphics[width=0.37\textwidth]{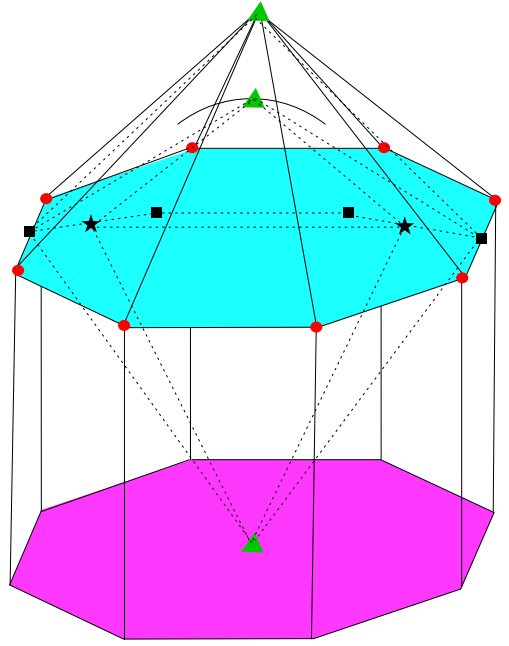} 
\caption{The square and the star points on the facet of the local polytope represent the classically-correlated (CC) boxes  and the quantum Mermin boxes
respectively. 
The subpolytope, $\mathcal{N}_{mm}$, 
formed by the PR-boxes and the CC boxes is represented by the region connecting 
the triangle point on the top, the square points and the triangle point at the centre of the bottom (white noise). 
The subpolytope, $\mathcal{N}_{Tmm}$, whose vertices are the Tsirelson boxes and CC boxes is represented by the region connecting 
the triangle point on the curved surface,
the square points and white noise. The subpolytope, $\mathcal{N}_{Q}$, whose vertices are the Tsirelson boxes and Mermin boxes is represented
by the region connecting the triangle point on the curved surface,
the star points and white noise.
The region connecting the square points and white noise represents the subpolytope, $\mathcal{L}_{mm}$, formed by the CC boxes.
The subpolytope, $\mathcal{L}_{Q}$, formed by the Mermin boxes is represented by the region connecting the star points and white noise.}\label{finalfig}
\end{figure} 
Here we are interested in a restricted NS polytope, $\mathcal{N}_{Q}$, whose vertices are the $8$ Tsirelson boxes,
\be
P^{\alpha\beta\gamma}_{T}=\frac{1}{\sqrt{2}}P^{\alpha\beta\gamma}_{PR}+\left(1-\frac{1}{\sqrt{2}}\right)P_N, \label{tbox}
\ee
and the $8$ quantum Mermin-boxes, $P^{\alpha\beta\gamma}_M$, which are given in Eq. (\ref{Mmmm}) to figure out the constraints of quantum correlations.
This polytope can be realized by quantum theory  which we illustrate by the correlations arising from the convex mixture of 
the $8$ maximally entangled states,
\be
\rho=\sum^1_{k=0}\sum^1_{j=0}p^j_k \ketbra{\psi^j_k}{\psi^j_k} +\sum^1_{k=0}\sum^1_{j=0} q^j_k\ketbra{\phi^j_k}{\phi^j_k}, \label{ch2bds}
\ee
where $\ket{\psi^j_k}=\frac{1}{\sqrt{2}}(\ket{00}+(-1)^{j} i^{k}\ket{11})$ and $\ket{\phi^j_k}=\frac{1}{\sqrt{2}}(\ket{01}+(-1)^{j} i^{k}\ket{10})$.
For the measurement settings, $\mathcal{M}_{T}$:
\be
{\vec{a}_0}=\hat{x}, \quad {\vec{a}_1}=\hat{y},\quad
{\vec{b}_0}=\frac{1}{\sqrt{2}}(\hat{x}-\hat{y}) \quad \text{and} \quad {\vec{b}_1}=\frac{1}{\sqrt{2}}(\hat{x}+\hat{y}), \label{M_N}
\ee 
the correlation arising from the states in Eq. (\ref{ch2bds}) can be decomposed into $8$ Tsirelson boxes,
\ba
P(\rho,\mathcal{M}_{T})\!&=&\!p^0_0P^{000}_{T}+p^1_0P^{001}_{T}+p^0_1P^{100}_{T}+p^1_1P^{101}_{T}\nonumber\\
&&+q^0_0P^{011}_{T}+q^1_0P^{010}_{T}+q^0_1P^{111}_{T}+q^1_1P^{110}_{T}.\label{8Tb}
\ea
For the measurement settings, $\mathcal{M}_{M}$:   
\be
{\vec{a}_0}=\hat{x}, \quad {\vec{a}_1}=\hat{y}, \quad 
{\vec{b}_0}=-\hat{y} \quad \text{and} \quad {\vec{b}_1}=\hat{x},\label{M_C}
\ee 
the correlation arising from the states in Eq. (\ref{ch2bds}) can be decomposed into $8$ Mermin boxes,
\ba
P(\rho,\mathcal{M}_{M})\!&=&\!p^0_0P^{000}_{M}+p^1_0P^{001}_{M}+p^0_1P^{100}_{M}+p^1_1P^{101}_{M}\nonumber\\
&&+q^0_0P^{011}_{M}+q^1_0P^{010}_{M}+q^0_1P^{111}_{M}+q^1_1P^{110}_{M}.\label{8Mb}
\ea
Since the set of quantum correlations is convex \cite{Pitowski,WernerWolfmulti},
any convex mixture of the two correlations given in Eqs. (\ref{8Tb}) and (\ref{8Mb}),
\be
P=\lambda P(\rho,\mathcal{M}_{T})+(1-\lambda)P(\rho,\mathcal{M}_{M}),
\ee
is also quantum realizable which implies that the polytope $\mathcal{N}_{Q}$
is quantum. 

We obtain the following relationship between the two quantum correlations given in Eqs. (\ref{8Tb}) and (\ref{8Mb}). 
\begin{observation}
For any state given in Eq. (\ref{ch2bds}), Bell discord of the correlation given in Eq. (\ref{8Tb}) is related to the Mermin discord of the correlation 
given in Eq. (\ref{8Mb}) as follows,
\be
\mathcal{G}(P(\rho, \mathcal{M}_T))=\sqrt{2}\mathcal{Q}(P(\rho, \mathcal{M}_{M})). \label{rBMD}
\ee
\end{observation}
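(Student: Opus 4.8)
The plan is to sidestep the full $3$-decomposition and derive Eq.~(\ref{rBMD}) from a direct linear dictionary between the two measurement contexts $\mathcal{M}_T$ and $\mathcal{M}_M$. First I would note that Alice's directions are identical in the two contexts ($\vec{a}_0=\hat{x}$, $\vec{a}_1=\hat{y}$), while Bob's directions satisfy $\vec{b}_0^{(M)}+\vec{b}_1^{(M)}=\sqrt{2}\,\vec{b}_0^{(T)}$ and $\vec{b}_1^{(M)}-\vec{b}_0^{(M)}=\sqrt{2}\,\vec{b}_1^{(T)}$. For any two-qubit state the joint correlator is bilinear in the Bloch directions, $\braket{A_iB_j}=\hat{a}_i^{\top}T\hat{b}_j$ with a fixed correlation tensor $T$, so, writing $t_{ij}:=\braket{A_iB_j}$ for the context $\mathcal{M}_M$, one obtains $\braket{A_iB_0}=\frac{1}{\sqrt{2}}(t_{i0}+t_{i1})$ and $\braket{A_iB_1}=\frac{1}{\sqrt{2}}(t_{i1}-t_{i0})$ for the context $\mathcal{M}_T$, $i=0,1$.

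Next I would substitute this dictionary into the Bell functions $\mathcal{B}_{\alpha\beta}$ of Eq.~(\ref{MBF}) evaluated on $P(\rho,\mathcal{M}_T)$; a short computation makes the $\mathcal{M}_M$ correlators recombine in pairs, giving $\mathcal{B}_{00}=\sqrt{2}\,|t_{01}+t_{10}|$, $\mathcal{B}_{01}=\sqrt{2}\,|t_{00}+t_{11}|$, $\mathcal{B}_{10}=\sqrt{2}\,|t_{01}-t_{10}|$ and $\mathcal{B}_{11}=\sqrt{2}\,|t_{00}-t_{11}|$. In parallel I would evaluate the Mermin functions $\mathcal{M}_{\alpha\beta}=|\mathcal{M}_{\alpha\beta\gamma}|$ of Eq.~(\ref{bimi}) on $P(\rho,\mathcal{M}_M)$: each $\mathcal{M}_{\alpha\beta\gamma}$ collapses, up to an overall $\gamma$-dependent sign that drops out under the modulus, to a single two-term combination of the $t_{ij}$, yielding $\mathcal{M}_{00}=|t_{01}+t_{10}|$, $\mathcal{M}_{01}=|t_{00}+t_{11}|$, $\mathcal{M}_{10}=|t_{01}-t_{10}|$ and $\mathcal{M}_{11}=|t_{00}-t_{11}|$. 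Hence $\mathcal{B}_{\alpha\beta}\big(P(\rho,\mathcal{M}_T)\big)=\sqrt{2}\,\mathcal{M}_{\alpha\beta}\big(P(\rho,\mathcal{M}_M)\big)$ for every label $\alpha\beta$.

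To conclude, I would use the identical combinatorial structure of the two measures: by Definitions~\ref{BDdef} and~\ref{defMD}, $\mathcal{G}=\min_i\mathcal{G}_i$ and $\mathcal{Q}=\min_j\mathcal{Q}_j$ are assembled from the quadruples $\{\mathcal{B}_{\alpha\beta}\}$ and $\{\mathcal{M}_{\alpha\beta}\}$ by the same three symmetric pairings into $2+2$ terms. Pulling the common factor $\sqrt{2}$ through the nested absolute values gives $\mathcal{G}_i\big(P(\rho,\mathcal{M}_T)\big)=\sqrt{2}\,\mathcal{Q}_i\big(P(\rho,\mathcal{M}_M)\big)$ for each $i$, and taking the minimum yields $\mathcal{G}(P(\rho,\mathcal{M}_T))=\sqrt{2}\,\mathcal{Q}(P(\rho,\mathcal{M}_M))$, which is Eq.~(\ref{rBMD}).

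The step that needs genuine care is the second one. The sign conventions in $\mathcal{B}_{\alpha\beta}$ (a fixed $(-1)^{\alpha},(-1)^{\beta},(-1)^{\alpha\oplus\beta\oplus1}$ pattern) and in $\mathcal{M}_{\alpha\beta\gamma}$ (a case split on the value of $\alpha\oplus\beta$) look quite different, so it is not obvious beforehand that the labels match; one must check explicitly that after the substitution the four Bell functions and the four Mermin functions coincide as labelled quantities, and that the $\gamma$-dependence of $\mathcal{M}_{\alpha\beta\gamma}$ is only an overall sign. In fact label matching is not strictly necessary --- equality of $\{\mathcal{B}_{\alpha\beta}\}$ and $\sqrt{2}\,\{\mathcal{M}_{\alpha\beta}\}$ as multisets already forces $\mathcal{G}=\sqrt{2}\,\mathcal{Q}$ because the three pairings are symmetric --- but exhibiting the clean correspondence keeps the argument transparent. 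As a sanity check, the maximally entangled state at $p=\frac{1}{2}$ gives $\mathcal{G}=2\sqrt{2}$ under $\mathcal{M}_T$ and $\mathcal{Q}=2$ under $\mathcal{M}_M$, consistent with Eq.~(\ref{rBMD}).
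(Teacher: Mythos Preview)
Your proposal is correct and follows essentially the same route as the paper: both arguments expand the Bell functions $\mathcal{B}_{\alpha\beta}$ at the settings $\mathcal{M}_T$, use linearity of the expectation value in Bob's measurement direction to rewrite them in terms of the correlators at $\mathcal{M}_M$, recognize the result as $\sqrt{2}\,\mathcal{M}_{\alpha\beta}$, and then carry the factor through the identical combinatorics of $\mathcal{G}_i$ and $\mathcal{Q}_i$. Your presentation via the relation $\vec{b}_j^{(T)}=\frac{1}{\sqrt{2}}(\vec{b}_1^{(M)}\pm\vec{b}_0^{(M)})$ and the $t_{ij}$ notation is slightly more systematic, and your remark that multiset equality would suffice even without the exact label match is a nice robustness observation, but the underlying idea is the same as the paper's.
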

\begin{proof}
The Bell functions for the settings given in Eq. (\ref{M_N}) reduce to the Mermin functions for the settings given in Eq. (\ref{M_C}) as follows: 
\begin{widetext}
\ba
\mathcal{B}_{\alpha\beta}&\!=\!&\frac{1}{\sqrt{2}}|\braket{\sigma_x\otimes (\sigma_x+\sigma_y)}+(-1)^\beta\braket{\sigma_x\otimes (\sigma_x-\sigma_y)}
+(-1)^\alpha\braket{\sigma_y\otimes (\sigma_x+\sigma_y)}\nonumber\\
&&+(-1)^{\alpha\oplus\beta\oplus1}\braket{\sigma_y\otimes (\sigma_x-\sigma_y)}|\nonumber\\
&=&\left\{
\begin{array}{lr}
(\alpha\oplus\beta\oplus1)\sqrt{2}|(-1)^{\beta}\braket{\sigma_x\otimes \sigma_x}\!+\!(-1)^{\alpha}\braket{\sigma_y\otimes \sigma_y}| 
\\+(\alpha\oplus\beta)\sqrt{2}|(-1)^{\gamma}\braket{\sigma_x\otimes \sigma_y}+(-1)^{\alpha\oplus\beta\oplus\gamma\oplus1}\braket{\sigma_y\otimes \sigma_x}| \\
=\sqrt{2}\mathcal{M}_{\alpha\beta}     \quad \text{for} \quad \alpha\beta=00, 01,\\
(\alpha\oplus\beta)\sqrt{2}|(-1)^{\beta}\braket{\sigma_x\otimes \sigma_x}\!+\!(-1)^{\alpha}\braket{\sigma_y\otimes \sigma_y}|\\
+(\alpha\oplus\beta\oplus1)\sqrt{2}|(-1)^{\gamma
}\braket{\sigma_x\otimes \sigma_y}+(-1)^{\alpha\oplus\beta\oplus\gamma\oplus1}\braket{\sigma_y\otimes \sigma_x}|  \\
=\sqrt{2}\mathcal{M}_{\alpha\beta}    \quad\! \!\text{for} \quad \alpha\beta=10,11
\end{array}
\right.
 \label{B-M}
\ea
\end{widetext}
due to the linearity of quantum theory, $\braket{A+B}=\braket{A}+\braket{B}$.
The relationship between the Bell and Mermin functions given in Eq. (\ref{B-M}) implies that 
$\mathcal{G}(\rho, \mathcal{M}_T)=\sqrt{2}\mathcal{Q}(\rho, \mathcal{M}_{M})$.
\end{proof} 
The relationship between Bell and Mermin discord given in Eq. (\ref{rBMD}) implies that the Mermin boxes  
limit nonlocality of the most nonlocal quantum boxes
to the Tsirelson bound since $\mathcal{G}(\rho, \mathcal{M}_T)\le 2\sqrt{2}$ follows from the fact that $\mathcal{Q}(\rho, \mathcal{M}_M)\le2$.

We now discuss the constraints of the quantum region, $\mathcal{N}_{Q}$, inside the full NS polytope. 
Notice that correlations in the region $\mathcal{N}_{Q}$ have maximal local randomness i.e., $\braket{A}_i=\braket{B}_j=0$. If the full NS polytope is constrained by
maximal local randomness, it gives rise to a subpolytope, $\mathcal{N}_{mm}$, whose vertices are the $8$ PR-boxes and $8$ classically-correlated (CC) boxes,
\be
P_{CC}^{\alpha\beta\gamma}(a_m,b_n|A_i,B_j)=\left\{
\begin{array}{lr}
\frac{1}{2}, & m\oplus n=\alpha i \oplus \beta j \oplus \gamma \\ 
0 , & \text{otherwise}.\\
\end{array} \label{CCE}
\right.
\ee  
The polytope, $\mathcal{N}_{Tmm}$, whose vertices are the $8$ Tsirelson boxes and the $8$ CC boxes is obtained by constraining $\mathcal{N}_{mm}$ by the 
Tsirelson inequalities, $\mathcal{B}_{\alpha\beta\gamma}\le2\sqrt{2}$ \cite{tsi1}. The polytope $\mathcal{N}_{Tmm}$ is quantum since its vertices are quantum realizable \cite{Pitowski}. 
Notice that the polytope, $\mathcal{N}_Q$, is contained inside $\mathcal{N}_{Tmm}$ (see fig. \ref{finalfig}).  
Since the Mermin boxes with maximally mixed marginals limits nonlocality of quantum correlations, finding the physical constraints of
$\mathcal{N}_Q$ would help us to single out quantum theory.
The set of local boxes which have maximal local randomness forms a polytope, $\mathcal{L}_{mm}$,
whose vertices are the CC boxes.
Inside this polytope, there exists a polytope, $\mathcal{L}_{Q}$, whose vertices are the $8$ maximally mixed marginals Mermin boxes. 
\section{Conclusions}\label{conc}
We have introduced the measures, Bell discord ($\mathcal{G}$) and Mermin discord ($\mathcal{Q}$), to characterize bipartite quantum correlations within the framework of GNST. 
We find that the full NS polytope can be divided into four parts: (i) $\mathcal{G}>0$ $\&$ $\mathcal{Q}=0$ region (ii) $\mathcal{G}=0$ $\&$ $\mathcal{Q}>0$ region 
(iii) $\mathcal{G}>0$ $\&$ $\mathcal{Q}>0$ region and (iv) $\mathcal{G}=\mathcal{Q}=0$ polytope. 
By using this division, we have obtained the $3$-decomposition of any NS box 
into PR-box, a maximally-local box with $\mathcal{Q}=2$ and a
$\mathcal{G}=\mathcal{Q}=0$ box. We have introduced two types of Mermin boxes which are local and extremal with respect to the $3$-decomposition.
We have identified the largest quantum region with minimal $\mathcal{G}=\mathcal{Q}=0$ boxes that is a subpolytope of the full NS polytope. 
This subpolytope gives us insights to find out what singles out
quantum theory from other nonsignaling theories.

We have applied Bell discord and Mermin discord to quantify nonclassicality of quantum correlations arising from the $2 \times 2$ (two-qubit) states. 
We find that the quantum-correlated states  which are neither classical-quantum states nor quantum-classical states
can give rise a $3$-decomposition i.e., nonzero Bell discord or/and Mermin discord for suitable incompatible measurements.
We find that when pure entangled states and Werner states give rise optimal Bell or Mermin discord, 
quantum correlations quantified by quantum discord in the Werner states
plays role analogous to entanglement in the pure states.
However, we have considered only those boxes with two binary inputs and two binary outputs.
Similarly, it would be interesting to study quantum correlations arising from $d_A \times d_B$ states  
by using NS polytope in which the black boxes have more inputs and more outputs \cite{Barrett,minput}.

It is known that the Bell inequalities serve as device-independent witnesses for entanglement i.e., 
it reveals the presence of entanglement in the observed statistics without any knowledge
of the dimension of the measured system and the measurement devices used \cite{DQKD}.
A nonzero Bell or Mermin discord of a local box cannot be a device-independent witness of nonclassicality
i.e., nonzero quantum discord and incompatibility of measurements.
This follows from the fact 
that local correlations which have nonzero Bell or Mermin discord can also arise from the separable classical 
states in higher dimensional space; for instance, a Mermin box can also arise from a classically-correlated state in higher 
dimensional space for compatible measurements \cite{NLRan}.
However, Bell and Mermin discord are 
semi-device-independent witnesses of nonclassicality of local boxes \cite{SDI} in the sense that they reveal the presence of nonzero quantum discord 
and incompatibility of measurements when the dimension of the measured system is restricted to be $2 \times 2$ and without 
any knowledge of the measurement devices used.

\section*{Acknowledgements}
I thank IISER Mohali for financial support. I am very thankful to Dr. P. Rungta for 
suggesting this problem and for many inspiring discussions subsequently.
I thank Dr. R. Srikanth (PPISR Bangalore), Manik Banik (ISI Kolkata) and Dr. K. P. Yogendran (IISER Mohali) for discussions. 
I am grateful to IJQI and two anonymous referees for comments and suggestions.   
\appendix
\section{An example to illustrate the notion of irreducible PR-box in the decomposition}\label{irreducible}
Here we illustrate the relation between the minimization in the definition of Bell discord given in Eq. (\ref{defBD}) and the irreducible PR-box component in the 
canonical decomposition given in Eq. (\ref{Gde}) with the following example:
\be
P=0.4P^{000}_{PR}+0.3P^{010}_{PR}+0.2P^{100}_{PR}+0.1P^{110}_{PR}. \label{xxxx}
\ee
We are now interested in finding whether this correlation has the irreducible PR-box component as it has single PR-box excess, $P^{000}_{PR}$. 
Let us now consider the other possible decompositions for the above correlation in which it can be written as a convex sum of the reduced PR-box and the three local boxes
which are uniform mixture of the two PR-boxes.  
If we start reducing the PR-box excess with the last PR-box to 
a convex sum of the irreducible PR-box and the uniform mixture of the two PR-boxes, 
the resulting decomposition obtained by further reducing this reduced PR-box excess with the other two PR-boxes 
can be written as the convex sum of a 
reduced PR-box and a local box,
\be
P=\mu P^{000}_{PR}+(1-\mu) P_L,
\ee
where $\mu=0.2$ and $P_L=\frac{1}{8}P^{000}_{PR}+\frac{1}{2}P^{010}_{PR}+\frac{1}{4}P^{100}_{PR}+\frac{1}{8}P^{110}_{PR}$. 
The nonzero reduced PR-box excess, $\mu$, in this decomposition, however, 
cannot be irreducible as $\mu$ vanishes for the other possible decompositions; 
for instance, if we start to reduce the PR-box excess with the second PR-box, $\mu$ will vanish. For the correlation in Eq. (\ref{xxxx}), 
$\mathcal{G}_1=\mathcal{G}_2=0$ and $\mathcal{G}_3=0.8$ 
which explains why the minimization in Eq. (\ref{defBD}) is required as $\mathcal{G}$ is intended to quantify irreducible PR-box component. 
\section{Linearity of Bell and Mermin discord w.r.t the canonical decompositions}\label{lbdmd}
$\mathcal{G}$ is, in general, not linear for the decomposition of a given correlation into the convex mixture of two $\mathcal{G}>0$ boxes.
For instance, consider a correlation which is the convex mixture of two PR-boxes,
\be
P=pP^i_{PR}+qP^j_{PR}; \quad p>q,\label{nlg>}
\ee
which has $\mathcal{G}(P)=4(p-q)$. Suppose $\mathcal{G}$ is linear for this decomposition, $\mathcal{G}(P)=p\mathcal{G}(P^i_{PR})+q\mathcal{G}(P^j_{PR})=4\ne4(p-q)$.
However, $\mathcal{G}$ is linear for the decomposition of the correlation in Eq. (\ref{nlg>}) into a mixture of a single PR-box and a $\mathcal{G}=0$ box,
\be
P=(p-q)P^i_{PR}+2q\left(\frac{P^i_{PR}+P^j_{PR}}{2}\right). \label{ggnl}
\ee
$\mathcal{G}$ is, in general, also not linear for the decomposition of a correlation into the convex mixture of two $\mathcal{G}=0$ boxes.
For instance, consider the following uniform mixture of two Mermin boxes (the triangle point on the facet of the local polytope in fig. \ref{NS3dfig}),
\be
P=\frac{1}{2}P^{1}_M+\frac{1}{2}P^{2}_M,\label{nggnl}
\ee
where $P^{1}_M=\frac{1}{2}\left(P^{000}_{PR}+P^{111}_{PR}\right)$ and $P^{2}_M=\frac{1}{2}\left(P^{000}_{PR}+P^{110}_{PR}\right)$. Evaluation of $\mathcal{G}$
on the right hand side by using linearity gives $\frac{1}{2}\mathcal{G}(P^{1}_M)+\frac{1}{2}\mathcal{G}(P^{2}_M)=0$, however, $\mathcal{G}(P)=2$.
The correlation in Eq. (\ref{nggnl}) can also be written in the isotropic PR-box form as follows,
\be
P=\frac{1}{2}P^{000}_{PR}+\frac{1}{2}P_N.
\ee
It is obvious that $\mathcal{G}$ is linear for this decomposition. Similarly, we can observe that Mermin discord is, in general, not linear for the 
the decomposition of a given correlation into a mixture of two $\mathcal{Q}>0$ boxes or $\mathcal{Q}=0$ boxes and linear for the canonical decomposition.

\end{document}